\newtheorem{theorem}{Theorem}
\newtheorem{lemma}[theorem]{Lemma}
\newtheorem{claim}[theorem]{Claim}
\newtheorem{observation}[theorem]{Observation}
\newcommand{\polylog}{\ensuremath{\mathrm{polylog}}\xspace}
\newcounter{note}[section]
\date{\vspace{-5ex}}
\begin{document}

\title{Quasi-PTAS for Scheduling with Precedences using LP Hierarchies}

\author{
Shashwat Garg\thanks{Supported by the Netherlands Organisation for Scientific Research (NWO) under project no.~022.005.025.}\\
Eindhoven University of Technology \\
\href{mailto:s.garg@tue.nl}{s.garg@tue.nl}
}

\maketitle

\begin{abstract}
A central problem in scheduling is to schedule $n$ unit size jobs with \textit{precedence constraints} on $m$ identical machines so as to minimize the makespan. For $m=3$, it is not even known if the problem is NP-hard and this is one of the last open problems from the book of Garey and Johnson.

We show that for fixed $m$ and $\epsilon$, $\polylog(n)$ rounds of Sherali-Adams hierarchy applied to a natural LP of the problem provides a $(1+\epsilon)$-approximation algorithm running in quasi-polynomial time. This improves over the recent result of Levey and Rothvoss, who used $r=(\log n)^{O(\log \log n)}$ rounds of Sherali-Adams in order to get a $(1+\epsilon)$-approximation algorithm with a running time of $n^{O(r)}$.
\end{abstract}

\clearpage

\section{Introduction}
A central problem in scheduling is the following: suppose we are given $n$ unit jobs which have to be processed non-preemptively on $m$ identical machines. There is also a precedence order among the jobs: if $i\prec j$, then job $i$ has to be completed before $j$ can begin. The goal is to find a schedule of the jobs with the minimum makespan, which is defined as the time by which all the jobs have finished. 

This problem admits an easy $(2-\frac{1}{m})$ approximation algorithm which was given by Graham \cite{G66} in the 60's and is one of the landmark results in scheduling. This algorithm is known as the \textit{list-scheduling} algorithm and works as follows: at every time $t=1,2,\dots$, if there is an empty slot on any of the $m$ machines, schedule any \textit{available} job there, where a job is available if it is not yet scheduled and all the jobs which must precede it have already been scheduled. This simple greedy algorithm is essentially the best algorithm for the problem and for almost half a century it was an open problem whether one can get a better approximation algorithm. In fact, this was one of the ten open problems in Schuurman and Woeginger's influential list of open problems in scheduling \cite{Ger}. It was known since the 70's that it is NP-hard to get an approximation factor better than $4/3$ \cite{L78}. Slight improvements were given by Lam and Sethi \cite{L77} who gave a $2-\frac{2}{m}$ approximation algorithm, and Gengal and Ranade \cite{G08} who gave a $2-\frac{7}{3m+1}$ approximation algorithm for $m\ge 4$. Finally in 2010, Svensson \cite{Ola} showed that assuming a variant of Unique Games conjecture due to Bansal and Khot \cite{B09}, for any constant $\epsilon>0$ there is no $(2-\epsilon)$ approximation algorithm for the problem. 

However, this still leaves open the problem for the important case when $m$ is a constant. In fact in practice, usually the number of jobs are very large but there are only a few machines. Surprisingly, for $m=3$, it is not even known if the problem is NP-hard. This is one of the four problems from the book of Garey and Johnson \cite{GJ79} whose computational complexity is still unresolved. 

In order to get a better algorithm for the case when $m$ is a constant, a natural strategy is to write a linear program (LP), and for this problem, one such LP is the time-indexed LP \eqref{lp}, in which we first make a guess $T$ of the makespan and then solve the LP. The value of the LP is the smallest $T$ for which the LP is feasible, and the worst case ratio of the optimal makespan and the value of the LP is known as the integrality gap of the LP. It is well known that LP \eqref{lp} has an integrality gap of at least $2-\frac{2}{m+1}$ (see e.g. \cite{R16}), which suggests that one needs to look at stronger convex relaxations in order to get a better algorithm. Such a stronger convex relaxation can be obtained by applying a few rounds of a hierarchy to the LP, and in this paper, we will use the Sherali-Adams hierarchy \cite{Sher}. It is known that just one round of Sherali-Adams hierarchy reduces the integrality gap to $1$ for $m=2$ and thus, the problem can be solved exactly in this case (credited to Svensson in \cite{Rot13}). Claire Mathieu in \cite{Dag10} asked if one can get a $(1+\epsilon)$-approximation algorithm using $f(\epsilon, m)$ rounds of Sherali-Adams hierarchy for some function $f$ independent of $n$, which would imply a PTAS for the problem when $m$ is a constant. This is also Open Problem 1 in Bansal's recent list of open problems in scheduling \cite{Ban17}. 

To get some intuition behind why hierarchies should help in this problem, let us first look at the analysis for Graham's list-scheduling algorithm. At the end of this algorithm, the number of time slots which are busy, that is where all the $m$ machines have some job scheduled on them, is a lower bound on the optimum. Also, the number of non-busy time slots is a lower bound on the optimum. This is because there must be a \textit{chain} of jobs $j_1\prec j_2 \prec \dots \prec j_k$ such that one job from this chain is scheduled at each non-busy time, and the length of any chain in the instance is clearly a lower bound on the optimum. This implies that the makespan given by the algorithm, which is the sum of the number of busy and non-busy time slots, is a $2$-approximation of the optimum makespan, and a slightly more careful argument gives the guarantee of $(2-1/m)$. Now the key idea is that if the instance given to us has a maximum chain length of at most $\mathcal{\epsilon}$ times the optimal makespan, then Graham's list-scheduling algorithm already gives a $(1+\epsilon)$-approximation, and hierarchies provide, via conditionings, a good way to ``effectively" reduce the length of the chains in any given instance. 

Though the question of whether one can get a $(1+\epsilon)$-approximation algorithm using $f(\epsilon,m)$ rounds of Sherali-Adams hierarchy is still unresolved, a major breakthrough was made recently by Levey and Rothvoss \cite{R16}, who gave a $(1+\epsilon)$-approximation algorithm using $r=(\log n)^{O(m^2\log\log n/\epsilon^2)}$ rounds of Sherali-Adams. This gives an algorithm with a running time of $n^{O(r)}$, which is faster than exponential time but worse than quasi-polynomial time.

\subsection{Our Result}
In this paper, we improve over the result of Levey and Rothvoss \cite{R16} by giving a $(1+\epsilon)$-approximation algorithm which runs in quasi-polynomial time. Formally, we show the following:

\begin{theorem}
\label{thm:main}
The natural LP \eqref{lp} for the problem augmented with $r$ rounds of Sherali-Adams hierarchy has an integrality gap of at most $(1+\epsilon)$, where $r= O_{m,\epsilon}( \log^{O(m^2/\epsilon^2)}n)$. Moreover, there is a $(1+\epsilon)$-approximation algorithm for this problem running in time $n^{O(r)}$.
\end{theorem}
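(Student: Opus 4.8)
The plan is to follow the recursive/conditioning strategy pioneered by Levey and Rothvoss, but to replace their coarse, single-scale chain-shortening argument with a more economical one that loses only $\polylog(n)$ rounds of Sherali-Adams overall. First I would set up the time-indexed LP \eqref{lp} augmented with $r$ rounds of Sherali-Adams, guess the makespan $T$ to within a $(1+\epsilon)$ factor, and observe (as in the discussion preceding the theorem) that it suffices to run Graham's list-scheduling on an instance in which the longest chain of jobs that are still ``alive'' has length at most $\epsilon T / \poly(\text{something small})$; then the busy-time lower bound plus the chain lower bound gives a $(1+\epsilon)$-approximate makespan. The reduction to such short chains is where the hierarchy is used: conditioning the Sherali-Adams solution on the event that a particular job is scheduled in a particular time window fixes that job and, crucially, restricts where its entire chain can go, so a small number of well-chosen conditionings ``kills'' a long chain.

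The key steps, in order, are: (1) Partition the time horizon $[T]$ into $\log n$ dyadic intervals (or a laminar family of intervals at geometrically decreasing scales); the precedence structure, together with an integral-looking Sherali-Adams solution, forces most jobs to be confined to a single interval at a given scale, with only a $\poly(m,1/\epsilon)$-fraction ``crossing'' between children of an interval. (2) Within each interval, use conditioning on $O_{m,\epsilon}(1)$ jobs — chosen to be the endpoints of a near-longest chain localized to that interval — to reduce the localized chain length by a constant factor, or equivalently to split the interval's job set into two roughly balanced halves whose chains no longer interact. (3) Recurse on the two halves; because the recursion tree has depth $O(\log n)$ and each node spends $\polylog(n)$ conditionings (one needs to be careful here — the naive bound would be $O(\log n)$ conditionings per level for a total of $\log^2 n$, and squeezing this down to the stated exponent $O(m^2/\epsilon^2)$ in $\log n$ is the delicate accounting), the total number of conditionings, and hence the required number $r$ of Sherali-Adams rounds, is $O_{m,\epsilon}(\log^{O(m^2/\epsilon^2)} n)$. (4) After all conditionings, the surviving fractional solution is still a valid Sherali-Adams solution of a lower level on the residual instance, every residual chain is short, and Graham's algorithm applied to the support of this solution yields a schedule of makespan $(1+\epsilon)T$; rounding the $O(\log n)$-deep recursion back up costs only another $(1+\epsilon)$ factor by choosing $\epsilon$ suitably smaller at the outset.

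I expect the main obstacle to be step (3): controlling the number of conditionings so that it is polylogarithmic rather than quasi-polynomial. Levey and Rothvoss lose a factor of roughly $(\log n)^{O(\log\log n)}$ precisely because at each of the $O(\log n / \log\log n)$ recursion levels they must handle $(\log n)^{O(1)}$ ``top'' jobs, and these multiply across levels. To get down to $\polylog(n)$ one has to argue that the conditionings performed at different levels of the recursion can be \emph{shared} or \emph{amortized} — e.g., that the jobs conditioned on near the root already shorten chains deep in the tree, so that deeper levels need far fewer fresh conditionings — and simultaneously maintain that the conditioned Sherali-Adams solution still certifies a small fractional makespan and small fractional crossing between intervals. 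Making this amortization rigorous, while tracking how the approximation error $\epsilon$ and the machine count $m$ enter the exponent, is the technical heart of the argument; the running time bound $n^{O(r)}$ is then immediate since one round of Sherali-Adams on an LP with $\poly(n)$ variables and constraints produces an LP of size $n^{O(r)}$, solvable in time $n^{O(r)}$, and the recursion itself is polynomial-time given the solution.
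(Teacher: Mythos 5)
There is a genuine gap, and it sits exactly where you flag it: your step (3) is the theorem, and the proposal does not contain the idea that makes it work. You suggest that the conditionings at different recursion levels must be ``shared or amortized,'' but that is not how the bound is obtained. The actual mechanism is twofold. First, sibling subproblems at a given recursion level are solved \emph{independently}: each recursive call receives the same current Sherali-Adams solution and conditions on it separately, so the number of rounds consumed adds up only along a single root-to-leaf path of the recursion tree, not over all nodes. This turns the count into a recurrence $r(T)\le \polylog(n)+r(T/2^{k})$ with $k=\Theta_{m,\epsilon}(\log\log n)$, which telescopes to $\polylog(n)$. Second, to make the per-call cost $\polylog(n)$ one must only ever condition on the top $Ck$ levels (i.e.\ $2^{Ck}=\polylog(n)$ intervals) before recursing; the danger is that after doing so no ``good'' (sparse) batch of levels exists to discard, and the paper's key observation is that in that case the top $Ck$ levels are so bottom-heavy that one can throw away the first $Ck-(4m/\epsilon)^2k$ levels outright, charging the loss to the dense lower batches. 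Your proposal has no analogue of this dichotomy, and without it the Levey--Rothvoss obstruction (no good batch in the first $o(k^2)$ levels) forces you back to $(\log n)^{O(\log\log n)}$ rounds. Relatedly, your claim that $O_{m,\epsilon}(1)$ conditionings per interval suffice to halve the local chain length is unsubstantiated; the paper conditions up to $m/\delta=\polylog(n)$ times per interval (each conditioning pushes a $\delta|I|$-long chain into the right child), and it is the bound on the number of intervals touched, not on conditionings per interval, that keeps the total polylogarithmic.

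Two further steps of your outline would fail as written. The charging in the discard analysis is delicate: to guarantee that only an $O(\epsilon)$ fraction of jobs is ever discarded across \emph{all} type-2 recursions, the paper must freeze each job's assigned interval via the auxiliary support intervals $S^{(s)}_j$ (so that conditioning on lower levels cannot push a job down the laminar family and cause it to be charged twice); nothing in your proposal prevents repeated charging. And your endgame---running Graham's list scheduling on the residual instance---is not how the final schedule is produced: the top jobs must be inserted into the empty slots of the recursively computed bottom schedule via a bipartite-matching argument (Hall's theorem plus a precedence-repair step), which is what consumes the good batch as a buffer and bounds the extra discarded jobs by $\epsilon|I|/(4\log n)$ per call. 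The running-time claim $n^{O(r)}$ is fine, but the core of the theorem is unproved in the proposal.
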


Throughout the paper, we use the notation $O_{m,\epsilon}(.)$ to hide factors depending only on $m$ and $\epsilon$. The natural LP for the problem is the following:
\begin{eqnarray}
\sum_{t=1}^T y_{jt} &=& 1 \qquad\qquad\quad \forall j\in[n] \nonumber\\
\sum_{j} y_{jt} &\le& m \qquad \qquad\quad\forall t\le T \label{lp}\\
\sum_{t'\le t}y_{jt'} &\ge&  \sum_{t'\le t+1}y_{it'}   \qquad \forall t\le T,\,\,\forall j\prec i   \nonumber\\  
y_{jt} &\ge& 0 \qquad \qquad \quad \forall t\le T,\,\,\forall j\in[n] \nonumber
\end{eqnarray}

\noindent
Here $T$ is our guess on the optimum makespan. In an integral solution, $y_{jt}=1$ if job $j$ is scheduled at time $t$, and $0$ otherwise. The first constraint ensures that each job is scheduled at exactly one time and the second constraint ensures that no more than $m$ jobs are scheduled at any time. The third constraint ensures that if $j\prec i$, then job $i$ can only be scheduled at a time strictly later than job $j$.

\subsection{Overview of Our Algorithm}

Let us first give an overview of the algorithm of Levey-Rothvoss \cite{R16} since our algorithm builds up on it.

\paragraph{Previous approach.} At a high-level, the algorithm of Levey-Rothvoss \cite{R16} works by constructing a laminar family of intervals, where the topmost level has one interval $[1,T]$ and each succeeding level is constructed by dividing each interval of the previous level into two equal sized intervals, as shown in the figure below. Thus, there are $(1+\log T)$ levels where level $\ell$ contains $2^{\ell}$ intervals, each of size $\frac{T}{2^{\ell}}$ for $\ell=0,1,\dots,\log T$. This laminar family can be thought of as being a binary tree of depth $\log T$ with the interval $[1,T]$ as the root and the level $\ell$ intervals being vertices at depth $\ell$. \\

\tikzstyle{decision}=[diamond, draw, fill=blue!50]
\tikzstyle{line}=[draw, -latex']
\tikzstyle{block}=[draw, rectangle, fill=gray!50, text width=16cm, minimum height=10mm, text centered]

\begin{tikzpicture}[scale=1]
\draw[fill=gray!50] (0,1) rectangle (12,0); \node at (-0.75,0.5) {Level $0$};
\draw[fill=gray!50] (0,0) rectangle (6,-1); \node at (-0.75,-0.5) {Level $1$};
\draw[fill=gray!50] (6,0) rectangle (12,-1);
\draw[fill=gray!50] (0,-1) rectangle (3,-2);\node at (-0.75,-1.5) {Level $2$};
\draw[fill=gray!50] (3,-1) rectangle (6,-2);
\draw[fill=gray!50] (6,-1) rectangle (9,-2);
\draw[fill=gray!50] (9,-1) rectangle (12,-2);
\draw[fill=black] (6,-2.5) circle (1mm);
\draw[fill=black] (6,-3) circle (1mm);
\draw[fill=black] (6,-3.5) circle (1mm);
\draw[fill=gray!50] (0,-4) rectangle (0.75,-5); \node at (0.375,-5.5) {$1$};
\draw[fill=gray!50] (0.75,-4) rectangle (1.5,-5); \node at (1.125,-5.5) {$2$};
\draw[fill=gray!50] (1.5,-4) rectangle (2.25,-5); \draw[fill=black] (2.15,-5.5) circle (0.3mm); \draw[fill=black] (2.45,-5.5) circle (0.3mm); \draw[fill=black] (2.75,-5.5) circle (0.3mm);
\draw[fill=gray!50] (2.25,-4) rectangle (3,-5); 
\draw[fill=gray!50] (3,-4) rectangle (3.75,-5);
\draw[fill=gray!50] (3.75,-4) rectangle (4.5,-5);
\draw[fill=gray!50] (4.5,-4) rectangle (5.25,-5);
\draw[fill=gray!50] (5.25,-4) rectangle (6,-5);
\draw[fill=gray!50] (6,-4) rectangle (6.75,-5);\node at (-1,-4.5) {Level $\log T$};
\draw[fill=gray!50] (6.75,-4) rectangle (7.5,-5);
\draw[fill=gray!50] (7.5,-4) rectangle (8.25,-5);
\draw[fill=gray!50] (8.25,-4) rectangle (9,-5);
\draw[fill=gray!50] (9,-4) rectangle (9.75,-5);
\draw[fill=gray!50] (9.75,-4) rectangle (10.5,-5);
\draw[fill=gray!50] (10.5,-4) rectangle (11.25,-5);
\draw[fill=gray!50] (11.25,-4) rectangle (12,-5); \node at (11.625,-5.5) {$T$};
\node at (6,-6) {Figure $1:$ Construction of the laminar family used in the algorithm.};
\end{tikzpicture} 

Each job $j$ is first assigned to the smallest interval in this laminar family which fully contains the fractional support of $j$ as per the solution of the LP. Let $k=O(\log\log n)$ and let us call the top $k^2$ levels in the laminar family as the \textit{top levels} and the level succeeding it, that is the level $k^2$ as the \textit{bottom level}. Their algorithm conditions (see Section~\ref{sec:prelims} for the definition of conditioning) roughly $2^{k^2}$ times in order to reduce the maximum chain length among the jobs assigned to the top levels. Once the length of the chains in the top levels is reduced, the last $k$ of the top levels are discarded from the instance, and the sub-instances corresponding to each interval in the bottom level is recursively solved in order to get a partial schedule for all the jobs except those assigned to the top levels. The discarding of the $k$ levels is done in order to create a large gap between the top levels and the bottom level. Having such a gap makes it easier to schedule the remaining jobs in the top levels in the gaps of the partial schedule and Levey-Rothvoss \cite{R16} give an elegant algorithm to do this, provided that the maximum chain length among the jobs in the top levels is small. This step increases the makespan by at most a $(1+\epsilon/\log n)$ factor, which adds up to a loss of a $(1+\epsilon)$ factor in total over the at most $\log n$ depth of the recursion. Finally, one must also schedule the jobs in the $k$ levels which were discarded; to do this without increasing the makespan by more than a $(1+\epsilon)$ factor, it suffices to ensure that these $k$ discarded levels contain at most an $\epsilon$ fraction of the jobs contained in the top levels. Let us call such a set of $k$ consecutive levels, which contains at most an $\epsilon$ fraction of the number of jobs in the levels above it, as a \textit{good batch}.

Now the reason they had to condition $2^{k^2}=2^{O((\log\log n)^2)}$ times, which leads to the running time of $n^{O(2^{k^2})}$, comes from the fact that they condition on every interval in the top $k^2$ levels. And this is necessary to ensure that a good batch exists. For example, the number of jobs contained in the levels $[pk,(p+1)k)$ may be about $e^{\epsilon p}\cdot\left(\epsilon T/\log n\right)$ for all $p < (1/\epsilon)\ln(m\log n) \approx k$, in which case there is no good batch in the first $o(k^2)$ levels.

\paragraph{Our approach.} To get around the above issue, we observe the following: if, after conditioning on only the top $Ck$ levels, where $C=O(1/\epsilon^2)$ is a big enough constant, there does not exist a good batch in the top $Ck$ levels, then in fact a $(1-\epsilon)$ fraction of the jobs in the top $Ck$ levels must lie in the last $(1/\epsilon^2)k$ levels, that is, in levels from $Ck-(1/\epsilon^2)k$ to $Ck-1$. This implies that we can discard the jobs in the first $Ck-(1/\epsilon^2)k$ levels by charging them to the jobs in the levels from $Ck-(1/\epsilon^2)k$ to $Ck-1$, and in doing so we only discard an $\epsilon$ fraction of the total number of jobs.

Notice that we have only conditioned about $2^{Ck}=\polylog(n)$ times till now as there are these many intervals in the top $Ck$ levels. The next crucial observation is that after deleting the top $Ck-(1/\epsilon^2)k$ levels, the sub-instances defined by each of the subtrees rooted at the intervals on the level $Ck-(1/\epsilon^2)k$ can be solved independently of each other. This means that we can perform conditioning in parallel on each such sub-instance, and thus in total, we will condition at most $2^{Ck}\cdot\log T=\polylog(n)$ times, since the depth of the recursion is at most the height of the tree.

Now it might also happen that we already find a good batch in the top $Ck$ levels and in this case, we follow a strategy similar to Levey-Rothvoss \cite{R16} by recursing on the bottom intervals to find a partial schedule and fitting the jobs in the top levels in this partial schedule. This step might discard an $\epsilon$ fraction of the jobs in the top levels. These two cases, one where we recurse because there is no good batch in the top $Ck$ levels and one where we recurse because there is a good batch in the top $Ck$ levels, might interleave in a complicated manner. We show that the number of jobs ever discarded in the algorithm due to each type of recursion is at most an $O(\epsilon)$ fraction of the total number of jobs, which implies that we can achieve a makespan of $(1+\epsilon)T$. 

The above high-level description skims over a few important issues. One big challenge in the above approach is to ensure that the number of jobs discarded in the cases where we do not find a good batch stays small during the whole algorithm. Even though this is the case in one such recursive call, this might not happen over all the recursive calls taken together and we might end up discarding a constant fraction of the jobs. To get over this obstacle, we carefully control which interval each job is assigned to: if a job $j$ is assigned to an interval $I$ but after conditioning on some job $i\neq j$ which is assigned to a level lower than $j$, the fractional support of $j$ shrinks to a sub-interval of $I$, then we will still keep $j$ assigned to $I$, rather than moving it down the laminar family. This ensures that each job is not charged more than once for discarded jobs and thus the total number of discarded jobs is at most $\epsilon n$. This however slightly changes the way jobs are assigned to intervals and the techniques developed by Levey and Rothvoss \cite{R16} cannot be immediately applied to fit the jobs of the top levels in the partial schedule of the bottom levels in the case when a good batch exists. 
To tackle this issue, we will allow each job in the top levels to be scheduled outside of its (current) fractional support as long as it doesn't violate the precedence constraints with the jobs in the bottom levels. With this modification, we will be able to fit the jobs in the top levels in the partial schedule of the bottom levels without discarding more than an $\epsilon$ fraction of the top jobs. This implies that in both types of recursions, we only discard an $O(\epsilon)$ fraction of the jobs.

\section{Preliminaries on Sherali-Adams Hierarchy} 
\label{sec:prelims} 

In this section, we state the basic facts about Sherali-Adams hierarchy which we will need. We refer the reader to the excellent surveys \cite{Lau, Tul, Rot13} for a more extensive introduction to hierarchies.

Consider a linear program with $n$ variables $y_1,\dots,y_n$ where for each $i\in [n]$, $0\le y_i\le 1$. For $s\ge 0$, the $s^{th}$-round Sherali-Adams lift of this linear program is another linear program with variables $y^{(s)}_{S}$ for each $S\subseteq [n]$ satisfying $|S|\le s+1$, and some additional constraints. We will often denote $y^{(s)}_{\{i\}}$ by $y^{(s)}_i$ for simplicity. 

If we think of $y_i$ as the probability that $y_i=1$, intuitively the variables $y^{(s)}_S$ should equal the probability that each $i\in S$ has $y_i=1$, that is we would like to have that $y^{(s)}_S=\Pi_{i\in S} y_i$. As these constraints are not convex, we can only impose some linear conditions implied by them. In particular, for every constraint $a^Ty\le b$ of the starting LP, we add, for every $S,T\subseteq [n]$ such that $|S|+|T|\le s$, a new constraint given by 
\begin{equation}\label{eqn:sa}  \sum_{T'\subseteq T} (-1)^{|T'|}  \left(\sum_{i=1}^n a_i y^{(s)}_{S\cup T' \cup \{i\} }  -b y^{(s)}_{S\cup T'} \right)  \le 0 . \end{equation}

\noindent
If $y^{(s)}_S=\Pi_{i\in S}y_i$ was indeed true for all $|S|\le s+1$, then the above inequality can be succinctly written as $(a^Ty-b)\cdot\Pi_{i\in S}y_i \cdot\Pi_{i\in T}(1-y_i)  \le 0$, and are thus valid constraints for all $0-1$ solutions.

Observe that an $s^{th}$-round Sherali-Adams lift of an LP with $n$ variables and $m$ constraints is just another LP with $n^{O(s)}$ variables and $m\cdot n^{O(s)}$ constraints. Letting $y^{(s)}$ denote a feasible solution of the $s^{th}$-round Sherali-Adams lift, $y^{(s)}$ is also feasible for all $s'\le s$ rounds of Sherali-Adams and in particular is a feasible solution of the starting LP.

\paragraph{Conditioning.} Given a feasible solution $y^{(s)}$ of the $s^{th}$-round Sherali-Adams lift and $i\in[n]$ such that $y^{(s)}_i>0$, then we can \textit{condition on the event $y_i=1$} to get a feasible solution $z^{(s-1)}$ of the $(s-1)^{th}$-round Sherali-Adams lift defined as
\[ z^{(s-1)}_S = \frac{y^{(s)} _{S\cup \{i\}}  }{y^{(s)}_i}  \qquad\,\, \forall S: |S|\le s .\]

\noindent
The fact that $z^{(s-1)}$ is a feasible solution of the $(s-1)^{th}$-round Sherali-Adams lift follows easily from \eqref{eqn:sa}. Moreover, $z$ satisfies $z^{(s-1)}_i=1$ and the following useful property:

\begin{observation}
\label{obs:condsupp}
If for some $j\in[n]$, $y^{(s)}_j=0$ and we condition on $y_i=1$ for any $i\in [n]$, then $z^{(s-1)}_j=0$.
\end{observation}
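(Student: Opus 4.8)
\textbf{Proof proposal for Observation \ref{obs:condsupp}.}

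The plan is to unfold the definition of the conditioned solution $z^{(s-1)}$ and reduce the claim to a single instance of the Sherali--Adams constraint \eqref{eqn:sa} that forces a product-like monotonicity. By definition, $z^{(s-1)}_j = y^{(s)}_{\{i,j\}} / y^{(s)}_i$, and since $y^{(s)}_i > 0$ (this is required for conditioning to be defined), it suffices to show $y^{(s)}_{\{i,j\}} = 0$. Intuitively, $y^{(s)}_{\{i,j\}}$ should be ``$\mathrm{Pr}[y_i=1 \wedge y_j = 1]$'', which cannot exceed ``$\mathrm{Pr}[y_j=1]$'' $= y^{(s)}_j = 0$; the work is to extract exactly this inequality as a valid Sherali--Adams constraint.

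First I would apply \eqref{eqn:sa} to the trivial LP constraint $y_j \le 1$ of the starting LP (i.e.\ $a^T y \le b$ with $a$ the indicator vector of coordinate $j$ and $b = 1$), which is present since every variable satisfies $0 \le y_i \le 1$. Taking $S = \{i\}$ and $T = \emptyset$ (so $|S| + |T| = 1 \le s$, using that $s \ge 1$ whenever conditioning is performed), the inequality \eqref{eqn:sa} reads
\[ y^{(s)}_{\{i\} \cup \{j\}} - y^{(s)}_{\{i\}} \le 0, \]
that is, $y^{(s)}_{\{i,j\}} \le y^{(s)}_i$. That bound alone is not enough, so instead I would use the constraint $-y_j \le 0$ (i.e.\ $y_j \ge 0$, also part of the base LP) together with a suitable lift. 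Concretely, apply \eqref{eqn:sa} to $-y_j \le 0$ with $S = \{j\}$ and $T = \{i\}$: the $T' = \emptyset$ term gives $-y^{(s)}_{\{j\}}$ and the $T' = \{i\}$ term gives $+y^{(s)}_{\{i,j\}}$ (after the $-y_j$ has raised the index to $\{i,j\}$), so the constraint becomes $y^{(s)}_{\{i,j\}} - y^{(s)}_{\{j\}} \le 0$, i.e.\ $y^{(s)}_{\{i,j\}} \le y^{(s)}_j$. Combined with nonnegativity $y^{(s)}_{\{i,j\}} \ge 0$ (itself a Sherali--Adams constraint, or derivable from \eqref{eqn:sa} applied to $-y_j \le 0$ with appropriate $S,T$), and the hypothesis $y^{(s)}_j = 0$, we conclude $y^{(s)}_{\{i,j\}} = 0$, hence $z^{(s-1)}_j = 0$.

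The only mild subtlety is bookkeeping with the sign conventions and index sets in \eqref{eqn:sa} to confirm that the term $y^{(s)}_{S \cup T' \cup \{i\}}$ indeed collapses to $y^{(s)}_{\{i,j\}}$ (using that $S \cup \{i\} = \{i,j\} \cup \{j\}$-type set identities absorb repeated elements, since these are sets) and that the monomial $y^{(s)}_{\{i,j\}} \le y^{(s)}_j$ has the right orientation; one should also note that nonnegativity of all lifted variables $y^{(s)}_S$ is standard (it follows by applying \eqref{eqn:sa} to the constraints $y_\ell \ge 0$ with $T$ ranging over the relevant coordinates, or is listed explicitly among the Sherali--Adams constraints). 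I expect no real obstacle here — the whole argument is a one-line consequence of the valid-inequality machinery already recorded in the excerpt, and the main care is simply matching notation.
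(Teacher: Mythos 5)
Your proof is correct and in essence the same as the paper's: both reduce the claim to the lifted inequality $y^{(s)}_{\{i,j\}} \le y^{(s)}_j$ and then divide by $y^{(s)}_i > 0$ (the paper also tacitly uses nonnegativity of $y^{(s)}_{\{i,j\}}$ to turn $\le 0$ into $=0$, which you make explicit). The only difference is the derivation of that inequality: the paper lifts $y_i \le 1$ with $S=\{j\}$, $T=\emptyset$, which needs only $|S|+|T|=1\le s$, whereas your lift of $y_j \ge 0$ with $S=\{j\}$, $T=\{i\}$ needs $s\ge 2$ — immaterial for the algorithm, where $s$ is large, but the paper's choice covers the edge case $s=1$.
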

\begin{proof}
Using the Sherali-Adams lift \eqref{eqn:sa} of the constraint $y_i\le 1$ with $S=\{j\}$ and $T=\phi$, we get
\[  y^{(s)}_{\{i, j\}} \le y^{(s)}_{j}.  \]
This gives
\[ z^{(s-1)}_j =\frac{y^{(s)} _{\{i,j\}}  }{y^{(s)}_i} \le \frac{y^{(s)} _{j}  }{y^{(s)}_i} =0. \] 
\end{proof}

One can think of the solution $z^{(s-1)}_S$ as giving the conditional probability of $y^{(s)}_S=1$ given $y^{(s)}_i=1$. By conditioning on a variable $y_i$ to be $1$, we will mean that we replace the current fractional solution $y^{(s)}$ with the fractional solution $z^{(s-1)}$ as in above. Observation~\ref{obs:condsupp} implies that conditioning can never increase the support of any variable, or in other words, if the probability that $y_j=1$ is zero, then the conditional probability that $y_j=1$ conditioned on $y_i=1$, is also zero.

%
%
%
%
\section{Algorithm}
Before we describe our algorithm, we first develop some notation. Let $T$ denote the value of the LP \eqref{lp} and let $y^{(s)}$ denote the feasible solution of the $s^{th}$-round Sherali-Adams lift of the LP we get after we condition $r-s$ times in the algorithm. We will say that we are in \textit{round $s$} of the algorithm if we have conditioned $r-s$ times so far. So we will start the algorithm in round $r$ with solution $y^{(r)}$, and if we condition in round $s$, we go to round $s-1$ with solution  $y^{(s-1)}$. 

For each job $j$, define the \textit{fractional support interval of $j$ in round $s$} as $F^{(s)}_j:=[ r^s_j, d^s_j   ]$, where $r^s_j$ is the smallest time $t$ for which $y^{(s)}_{jt}>0$ and $d^s_j$ is the largest time for which $y^{(s)}_{jt}>0$ ($r_j$ and $d_j$ are used to symbolize release time and deadline). In other words, $F^{(s)}_j$ is the minimal interval which fully contains the fractional support of job $j$ in $y^{(s)}$. By Observation~\ref{obs:condsupp}, upon conditioning, the fractional support interval can only shrink, that is $F^{(s-1)}_j \subseteq  F^{(s)}_j$.

For each job $j$, we also define a \textit{support interval} $S^{(s)}_j$. We will initially set $S^{(r)}_j:=F^{(r)}_j$. In later rounds, we will update $S^{(s)}_j$ in such a way that $F^{(s)}_j \subseteq S^{(s)}_j \subseteq F^{(r)}_j$. Intuitively, $S^{(s)}_j$ reflects our knowledge in round $s$ of where $j$ can be scheduled. Notice that we might schedule $j$ outside of the fractional support interval $F^{(s)}_j$.

A schedule of jobs is called a \textit{feasible schedule} if it satisfies the precedence constraints among all the jobs and a \textit{partial feasible schedule} if it schedules some of the jobs and satisfies the precedence constraints among them. In order to get a feasible schedule of all the jobs with a makespan of at most $(1+\epsilon)T$, it suffices to show the following:

\begin{theorem}
\label{thm:main2}
We can find a partial feasible schedule $\sigma:[n] \rightarrow[T]\cup \{ \text{DISCARDED} \}$ such that $\sigma(j)=\text{DISCARDED}$ for at most $\epsilon T$ jobs.
\end{theorem}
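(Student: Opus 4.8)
The plan is to prove Theorem~\ref{thm:main2} by a recursive algorithm on the laminar interval tree, carefully tracking two separate ``charging accounts'' for discarded jobs. First I would set up the recursion: given a subtree rooted at an interval $I$ at some level $\ell_0$ together with the current Sherali-Adams solution restricted to jobs assigned (via the rule described in the overview, using the support intervals $S^{(s)}_j$ rather than $F^{(s)}_j$) to the levels inside $I$, condition $2^{Ck}$ times --- once on a suitable job in each of the $2^{Ck}$ intervals in the top $Ck$ levels of this subtree --- where $k=\Theta(\log\log n)$ and $C=\Theta(1/\epsilon^2)$. The conditionings are chosen (following the Levey--Rothvoss ``conditioning to kill chains'' argument) so that afterwards the maximum chain length among jobs assigned to the top $Ck$ levels of the subtree is at most $\epsilon \cdot (\text{length of }I)/\log n$, or more precisely small enough that list-scheduling on those jobs loses only a $(1+\epsilon/\log n)$ factor.

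Next I would do the case analysis that is the heart of the argument. \emph{Case A (good batch exists):} if some window of $k$ consecutive levels within the top $Ck$ levels contains at most an $\epsilon$ fraction of the jobs assigned to the levels strictly above it, discard those $k$ levels' jobs (charged to the ``good-batch account''), recurse independently on each of the $2^{k^2}$ subtrees hanging below the discarded batch to get partial schedules of all lower jobs, then fit the jobs of the surviving top levels into the gaps of the combined partial schedule --- using the modified fitting procedure that permits scheduling a top job anywhere consistent with its precedence constraints to the already-scheduled bottom jobs (not merely inside $F^{(s)}_j$), which is why the $S_j$ bookkeeping matters. This step discards at most an $\epsilon$ fraction of the top jobs, charged to a second ``fitting account''. \emph{Case B (no good batch):} then, by a telescoping/geometric-growth estimate, a $(1-\epsilon)$ fraction of all jobs in the top $Ck$ levels lie in the bottom $k/\epsilon^2$ of those levels; discard all jobs in the top $Ck - k/\epsilon^2$ levels (charged to the good-batch account, since they are at most an $\epsilon$ fraction of what's below them inside this subtree), and recurse independently on each subtree rooted at a level-$(\ell_0 + Ck - k/\epsilon^2)$ interval. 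Parallel independence of the recursive calls is what keeps the total number of conditionings at $2^{Ck}\cdot \log T = \polylog(n)$, giving the running time $n^{O(r)}$.

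The main obstacle --- and where I'd spend the most care --- is the global accounting: showing that across the \emph{entire} recursion tree, the total number of jobs ever discarded is $O(\epsilon n)$, not merely $O(\epsilon)$ times the jobs of a single subtree. For the good-batch account this needs the invariant that each job is charged at most once: a job contributes to the ``jobs below'' count for exactly one discard event along the root-to-leaf path, which is exactly why we must \emph{not} re-assign a job down the laminar tree when conditioning shrinks its fractional support --- the assignment (and the interval $S^{(s)}_j$) is frozen at the round the job was first assigned, so the laminar structure of ``who is below whom'' is stable and the charges are disjoint. For the fitting account one argues that Case A discard events occur on disjoint job sets partitioned by subtree, and each loses $\le \epsilon$ fraction of its own top jobs. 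Combining, total discards $\le O(\epsilon) n$, and rescaling $\epsilon$ proves the theorem; the makespan bound $(1+\epsilon)T$ then follows because the non-discarded jobs are placed by a schedule that loses a $(1+\epsilon/\log n)$ factor at each of the $\le \log T$ recursion depths.

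A secondary technical point to nail down is the correctness of the modified fitting step in Case A: one must verify that allowing a top job $j$ to be scheduled outside $F^{(s)}_j$ (but respecting precedences to bottom jobs) still yields a feasible schedule and still only needs the small-chain-length property to avoid discarding too many top jobs --- essentially re-running the Levey--Rothvoss fitting analysis with ``$S^{(s)}_j$'' and ``precedence-feasible region'' in place of ``$F^{(s)}_j$'', checking that the $k$-level gap created by the discarded batch provides enough slack.
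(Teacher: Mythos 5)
Your plan follows the paper's proof essentially step for step: the same two-case recursion (good batch found vs.\ no good batch in the top $Ck$ levels), the same parallel-independence trick to keep the conditioning count at $\polylog(n)$, the same ``freeze the assignment via $S^{(s)}_j$'' device to make the type-2 charges disjoint, and the same modified fitting step for the surviving top jobs. The global accounting you describe for both discard accounts matches the paper's Lemmas on type-1 and type-2 discards.

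There is, however, one genuine gap: you dispose of the fitting step in Case A by saying one ``essentially re-runs the Levey--Rothvoss fitting analysis with $S^{(s)}_j$ in place of $F^{(s)}_j$,'' but this is precisely where the paper has to do new work, and it is not routine. Once jobs are kept assigned to intervals larger than their current fractional support, the Levey--Rothvoss matching argument no longer applies verbatim: one must construct, for each top job $j$, a window $[r_j,d_j]$ (the paper chops the first and last level-$qk$ intervals off $S^{(s)}_j$) and then \emph{prove} that these windows are monotone under precedence ($j\prec i$ implies $r_j\le r_i$ and $d_j\le d_i$) and separated from the scheduled bottom jobs (condition 3 of Theorem~\ref{thm:borrowed}). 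Both facts hinge on a structural lemma (Lemma~\ref{lem:meta}): if $j\prec i$ then $i$ can never be assigned to a sub-interval of $I^{(s)}(j)_{\text{left}}$, nor $j$ to a sub-interval of $I^{(s)}(i)_{\text{right}}$ --- and proving that requires tracking the last round in which $j$ still had fractional support in the right half, which is exactly the interaction between conditioning order and the frozen supports. Without this lemma your fitting step could assign a predecessor a window lying entirely to the right of its successor's window, and no matching-based argument would recover feasibility. Two smaller inaccuracies: you condition once per interval, but killing all chains of length $>\delta|I|$ requires up to $m/\delta$ conditionings per interval (still $\polylog(n)$ in total, so the bound survives); and the ``$2^{k^2}$ subtrees'' below the discarded batch should be $2^{qk}\le 2^{Ck}$ --- the $k^2$ levels belong to the Levey--Rothvoss scheme you are improving on.
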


Clearly a schedule $\sigma$ as in Theorem~\ref{thm:main2} has makespan at most $T$. Having such a partial feasible schedule, we can easily convert it to a feasible schedule of all the $n$ jobs with a makespan of at most $(1+\epsilon )T$: iterate through every job $j$ discarded in $\sigma$ and find the earliest time $t$ by when all the jobs which must precede $j$ have either already been scheduled or are as of yet discarded. Create a new time slot between times $t$ and $t+1$ containing only job $j$. This increases the makespan by one for every job discarded in $\sigma$.

\paragraph{Laminar Family.} A laminar family of intervals is defined in the following manner. The topmost level, level $0$, has one interval $[1,T]$. Each succeeding level is constructed by dividing each interval of the previous level into two equal sized intervals\footnote{Without loss of generality, $T$ is a power of $2$. Otherwise, we can add a few dummy jobs at the end which must succeed all other jobs and which make $T$ a power of $2$.}. Thus there are $(1+\log T)$ levels, where level $\ell$ contains $2^{\ell}$ intervals each of size $\frac{T}{2^{\ell}}$ for $\ell=0,1,\dots,\log T$. This laminar family can be thought of as being a binary tree of depth $\log T$ with the interval $[1,T]$ as the root, and the level $\ell$ intervals as being vertices at depth $\ell$. 

Let $\mathcal{I}_{\ell}$ denote the set of intervals at level $\ell$ of the laminar family. For an interval $I\in\mathcal{I}_{\ell}$, a \textit{sub-interval of $I$} is any interval $I'\subseteq I$ of the laminar family, including $I$ itself; and $I_{\text{left}},I_{\text{right}}$ will denote the left and right sub-intervals respectively of $I$ in $\mathcal{I}_{\ell+1}$. By the midpoint of $I$, we will mean the right boundary of $I_{\text{left}}$.

Job $j$ is \textit{assigned} to interval $I$ in round $s$ if $I$ is the smallest interval in the laminar family such that $S^{(s)}_j\subseteq I$. This assignment of jobs to intervals depends on $S^{(s)}_j$ and will change as $s$ and $S^{(s)}_j$ change during the algorithm. Let $I^{(s)}(j)$ denote the interval to which $j$ is assigned in round $s$ of the algorithm. For an interval $I$ in the laminar family, let $\mathcal{J}^{(s)}(I)$ denote the set of jobs assigned to $I$, and let $\mathcal{J}^{(s)}(\mathcal{I}_{\ell})$ denote the set of jobs assigned to intervals in $\mathcal{I}_{\ell}$ in round $s$ of the algorithm. 

\paragraph{Batches.} Let $k=\log(\frac{32m}{\epsilon} \cdot \log n)$. For $p\ge 0$, define the $p^{th}$ batch as
\[ \mathcal{B}_p =\{  \mathcal{I}_{pk},\mathcal{I}_{pk+1},\dots,\mathcal{I}_{(p+1)k-1}\} . \]
That is, it denotes the set of $k$ consecutive levels starting from level $pk$ till level $(p+1)k-1$. Let $\mathcal{J}^{(s)}(\mathcal{B}_{p})$ denote the set of jobs assigned to intervals in batch $p$ in round $s$. 
 Batch $\mathcal{B}_p$ for $p\ge 1$ is called a \textit{good batch with respect to $[T]$ in round $s$} if 
\begin{equation} \label{eqn:gb} |\mathcal{J}^{(s)}(\mathcal{B}_{p})| \le \frac{\epsilon}{4m} \sum_{i=0}^{p-1} |\mathcal{J}^{(s)}(\mathcal{B}_{i})|. \end{equation}
We will omit the ``with respect to $[T]$" if it is clear from the context that we start the summation in the right hand side of \eqref{eqn:gb} from the first batch in $[T]$. Similarly, we will omit the ``in round $s$" if $s$ is clear from the context.

\paragraph{Algorithm.} We can now describe our algorithm and split its description in two steps for clearer exposition. Let $C=2(4m/\epsilon)^2+1,k=\log(\frac{32m}{\epsilon} \cdot \log n)$ and $\delta=\frac{\epsilon}{8mCk2^{Ck}\log n }$. The reader can think of these parameters as being $k=\Theta_{m,\epsilon}(\log\log n)$ and $\delta=\Theta_{m,\epsilon}(1/\polylog(n))$. $s$ will always denote the current round of the algorithm, unless otherwise specified. We initialise $s:=r$ and for each job $j$, $S^{(r)}_j:=F^{(r)}_j$.\\


\textbf{Schedule($y^{(r)}, T$): \footnote{When the algorithm is called on an interval of length $L\le 2^{Ck}$, we can just ``brute force" by conditioning $mL$ times to find an exact solution. We avoid writing this explicitly in the algorithm for simplicity.}}
\begin{enumerate}
\item \textbf{Step 1: Reducing chain length in the top $qk\le Ck$ levels} 

In this step, we will reduce the length of the chains in each interval $I$ in the top $qk$ levels of the laminar family to at most $\delta |I |$, for some $q\le C$. This is done by going down the levels, starting from level $0$ till level $qk-1$, where $q$ is chosen such that
\begin{enumerate}
\item after having conditioned on all the levels from $0$ to $qk-1$, $\mathcal{B}_{q-1}$ is a good batch, or 
\item we have already conditioned on the top $Ck$ levels and found no good batch, in which case we set $q=C$. 
\end{enumerate}
The conditioning on the levels and update of $S_j$'s is done as follows. For $\ell=0,1,\dots,qk-1$:
\begin{itemize}
\item Let $s_{\text{old}}=s$ and for each $j$, let $\ell(j)$ denote the level of the interval $I^{(s_{\text{old}})}(j)$, that is the level to which $j$ is assigned at the beginning of this iteration of the loop.
\item We go over every interval $I\in\mathcal{I}_{\ell}$ and do the following: if $\mathcal{J}^{(s)}(I)$ has a chain of length more than $\delta |I|$, let $j$ be the first job in this chain. We condition on $j$ lying in $I_{\text{right}}$. 

After every conditioning, update $s:=s-1$ and set $S^{(s)}_j$ for every job $j$ as follows: 
\begin{itemize}
\item if $\ell(j)<\ell$, let $m_j$ denote the midpoint of $I^{(s_{\text{old}})}(j)$ and $[t_r,t_d]:=F_j^{(s)}$. If $F_j^{(s)} \subseteq I_{\text{left}}$, then we set $S^{(s)}_j:=[t_r,m_j+1]$, and if $F_j^{(s)} \subseteq I_{\text{right}}$, then we set $S^{(s)}_j:=[m_j,t_d]$. Otherwise, set $S^{(s)}_j:=F_j^{(s)}$.
\item if $\ell(j)\ge\ell$, set $S^{(s)}_j:=F_j^{(s)}$. 
\end{itemize}
\end{itemize}
That is, the support intervals $S^{(s)}_j$ are set such that if we condition on jobs in level $\ell$, then the jobs assigned to a level $\ell'<\ell$ before the conditionings stay assigned to level $\ell'$, and for all other jobs, $S^{(s)}_j$ equals the fractional support interval $F_j^{(s)}$. 

\item \textbf{Step 2: Recursion}

There are two cases to consider here, depending on which of $(a)$ or $(b)$ took place in the previous step. 
\begin{enumerate}[label=(\roman*)]
\item \textbf{If $(a)$ occured, perform a recursion of type $1$. } 

This step is similar to the algorithm of \cite{R16}. We discard all the jobs in the good batch $\mathcal{B}_{q-1}$. Then for each interval $I\in\mathcal{I}_{qk}$, we recursively call Schedule($y^{(s)}, I$) to obtain a schedule $\tilde{\sigma}_I$, which are put together to form a partial feasible schedule $\tilde{\sigma}$ for the jobs assigned to a level $\ell\ge qk$.

Then we fit the jobs in the top levels, that is the jobs in $\mathcal{J}^{(s)}(\mathcal{B}_0) \cup \dots \cup\mathcal{J}^{(s)}(\mathcal{B}_{q-2})$ in the empty slots in $\tilde{\sigma}$. We give more details of how this is done in Section~\ref{sec:disc1}. Some jobs in the top levels will be discarded while doing this. 

Call this step a \textit{recursion of type $1$}. The number of jobs discarded in this step, that is the jobs in batch $\mathcal{B}_{q-1}$ along with the jobs in the top levels which are discarded, will be referred to as the jobs discarded due to this step. Notice that this does not include the jobs discarded in each recursive call to the intervals in $\mathcal{I}_{qk}$.

\item \textbf{If $(b)$ occured, perform a recursion of type $2$. } 

In this case, we discard all the jobs in $\mathcal{J}^{(s)}(\mathcal{B}_0) \cup \dots \cup \mathcal{J}^{(s)}(\mathcal{B}_{C-(4m/\epsilon)^2-1})$. Then for each interval $I\in \mathcal{I}_{(C-(4m/\epsilon)^2)k}$, we recursively call Schedule($y^{(s)}, I$) to get a schedule $\sigma_I$ which are put together to form a partial feasible schedule $\sigma$ for all the jobs assigned to a level $\ell\ge (C-(4m/\epsilon)^2)k$.

Call this step a \textit{recursion of type 2}. The number of jobs discarded in this step, that is the jobs in batches $\mathcal{B}_0, \dots, \mathcal{B}_{C-(4m/\epsilon)^2-1}$ will be referred to as the jobs discarded due to this step. Just like before, this does not include the jobs discarded in each recursive call to the intervals in $\mathcal{I}_{(C-(4m/\epsilon)^2)k}$.

\end{enumerate}

In each type of recursion, we recurse on multiple sub-instances defined by intervals of some level. It is important that the recursions on these sub-instances are done independently of each other. That is, we pass the same (current) Sherali-Adams solution to each recursive call, and conditionings done in one recursive call are independent of conditionings done in any other recursive call, and thus do not affect the fractional solution of any other recursive call.  
\end{enumerate}

\section{Analysis}
In this section, we prove Theorem~\ref{thm:main2} which will imply Theorem~\ref{thm:main}. We split the analysis into two parts: in the first part, we give a bound on the number of rounds of Sherali-Adams needed in the algorithm. In the second part, we show that we discard at most $\epsilon T$ jobs during the algorithm and schedule all other jobs by time $T$, thus proving Theorem~\ref{thm:main2}.

But first, we need to show that the algorithm is well-defined.

\begin{observation}
\label{obs:welldef}
In step $1$, when we condition on an interval $I$ by finding a chain $\mathcal{C}$ in $I$ and conditioning the first job $j$ in this chain to lie in $I_{\text{right}}$, this is possible to do. Moreover, this assigns every job in $\mathcal{C}$ to a sub-interval of $I_{\text{right}}$.
\end{observation}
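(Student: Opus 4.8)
The plan is to prove the two assertions in the statement separately: first that the conditioning is a legitimate operation, and second that it has the claimed effect on the whole chain $\mathcal C$. The first thing to pin down is what ``conditioning on $j$ lying in $I_{\text{right}}$'' means: it should be read as replacing $y^{(s)}$ by the convex combination $\sum_{t\in I_{\text{right}}}\tfrac{y^{(s)}_{jt}}{p}\,z^{[jt]}$, where $z^{[jt]}$ is the solution obtained by conditioning on $y_{jt}=1$ in the sense of Section~\ref{sec:prelims} and $p=\sum_{t\in I_{\text{right}}}y^{(s)}_{jt}$. Since the feasible region of the $(s-1)$-round Sherali--Adams lift is a polytope, a convex combination of feasible $(s-1)$-round solutions is again feasible, so this is well defined precisely when $p>0$, i.e.\ when $F^{(s)}_j$ meets $I_{\text{right}}$. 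Thus the first assertion reduces to showing $F^{(s)}_j\cap I_{\text{right}}\neq\emptyset$ for $j$ the first job of the chain.

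To get this I would use an invariant maintained throughout Step~1: while level $\ell$ is being processed, every job whose currently assigned interval sits at a level $\geq\ell$ satisfies $S^{(s)}_j=F^{(s)}_j$. The point is that $S^{(s)}_j$ can differ from $F^{(s)}_j$ only when the last update of $S_j$ used the ``$\ell(j)<\ell$'' branch and produced $[t_r,m_j+1]$ or $[m_j,t_d]$, where $m_j$ is the midpoint of the interval $j$ was assigned to at the start of the current level-iteration; both of these intervals straddle $m_j$, hence are not contained in any proper laminar sub-interval of that interval, so $j$ stays assigned there, at level $\ell(j)<\ell$. Since $j\in\mathcal J^{(s)}(I)$ with $I\in\mathcal I_\ell$ the level currently being processed, the invariant gives $S^{(s)}_j=F^{(s)}_j$; and as $I$ is by definition the minimal laminar interval containing $S^{(s)}_j$, this interval lies in $I$ but not in $I_{\text{left}}$, hence meets $I_{\text{right}}$. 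This proves the conditioning is possible.

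For the second assertion, after the conditioning Observation~\ref{obs:condsupp} gives $y^{(s-1)}_{jt}=0$ for $t\notin I_{\text{right}}$, so $F^{(s-1)}_j\subseteq I_{\text{right}}$. Let $i\in\mathcal C\setminus\{j\}$, so $j\prec i$. Every feasible Sherali--Adams solution satisfies the constraints of the base LP \eqref{lp}, so applying the third constraint with $t$ equal to the midpoint $m$ of $I$ yields $\sum_{t'\leq m+1}y^{(s-1)}_{it'}\leq\sum_{t'\leq m}y^{(s-1)}_{jt'}=0$, hence $F^{(s-1)}_i\subseteq[m+2,T]$. Since conditioning only shrinks supports and $i\in\mathcal J^{(s)}(I)$, we also have $F^{(s-1)}_i\subseteq S^{(s)}_i\subseteq I$, so $F^{(s-1)}_i\subseteq I\cap[m+2,T]\subseteq I_{\text{right}}$. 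By the invariant $S^{(s)}_i=F^{(s)}_i$, and, provided $i$ is at level $\ell$ at the moment of conditioning, the update rule applied to $i$ after this conditioning falls into the ``$S^{(s-1)}_i:=F^{(s-1)}_i$'' case, so $S^{(s-1)}_i=F^{(s-1)}_i\subseteq I_{\text{right}}$; therefore the minimal laminar interval containing $S^{(s-1)}_i$ is a sub-interval of $I_{\text{right}}$, which is the claim.

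The step I expect to be the crux is the support-interval bookkeeping: establishing the invariant above and, relatedly, verifying that every job in $\mathcal C$ is assigned to level $\ell$ at the moment of conditioning — i.e.\ that no job ever descends into $I$ from a strictly higher level during the processing of level $\ell$ carrying a ``stale'' support interval. If such a descent could occur, the ``$[m_j,t_d]$'' branch of the $S$-update, with $m_j$ the midpoint of a much coarser interval, could reassign that job to a proper super-interval of $I_{\text{right}}$ rather than a sub-interval. Ruling this out requires a careful case analysis of the three branches of the $S$-update together with an analysis of how the single conditioning propagates through the precedence constraints (a job pushed after $j$ lands in $[m+2,T]$, a job pushed before $j$ has its deadline capped inside $I$, and in neither case does its support come to lie inside a level-$\ell$ interval it was not already in), and this is exactly what makes the observation go through.
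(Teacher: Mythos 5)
Your proof is correct and follows essentially the same route as the paper's: the first part rests on the fact that top-down conditioning guarantees $S^{(s)}_j=F^{(s)}_j$ for jobs at the current level, so minimality of $I$ forces $F^{(s)}_j\cap I_{\text{right}}\neq\emptyset$, and the ``moreover'' part follows from the precedence constraint pushing the rest of the chain after $j$ into $I_{\text{right}}$. You simply spell out details the paper leaves implicit (the convex-combination meaning of conditioning on an event, the invariant that jobs at levels above $\ell$ keep straddling their midpoints, and the explicit LP inequality at $t=m$), all of which check out.
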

\begin{proof}
For the first part of the observation, we need to show that $F^{(s)}_j\cap I_{\text{right}} \neq \phi$, where $s$ is the round of the algorithm just before we condition on $j$ in $I$. As $j$ is assigned to $I$ in round $s$, it must be that $S_j^{(s)}\cap I_{\text{right}} \neq \phi$. The support intervals are updated in a way such that we can only have $S^{(s)}_j\neq F^{(s)}_j$ after we condition on a level below that of $j$. But because we always condition on the levels from top to bottom, we must have $S^{(s)}_j=F^{(s)}_j$. This proves the first part of the observation.
%

The moreover part follows easily now since every other job $i\in \mathcal{C}$ satisfies $j\prec i$ and must start scheduling only after $j$.
\end{proof}

\subsection{Bounding number of rounds of Sherali-Adams}
Let $r(|I|)$ denote the number of rounds of Sherali-Adams the algorithm uses when run on the instance defined by the subtree rooted at interval $I$ of the laminar family. Our goal in this subsection is to show $r(T)\le r$ for $r=O_{m,\epsilon}(\log^{O(m^2/\epsilon^2)} n) $. 

We first give an upper bound on the number of conditionings done in one interval $I$.

\begin{lemma}
\label{lem:condinter2}
The algorithm conditions at most $m/\delta$ times on any interval $I$ in step $1$.
\end{lemma}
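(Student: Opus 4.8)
The lemma says: during Step 1, when the algorithm runs on interval $I$, it performs at most $m/\delta$ conditionings total (across all levels from $0$ through $qk-1$ within the subtree rooted at $I$). I need to bound the total number of conditioning operations.

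**My proof plan.** The key observation is that each conditioning is triggered by finding a chain of length strictly more than $\delta|I'|$ inside $\mathcal{J}^{(s)}(I')$ for some interval $I'$ in the subtree of $I$, and we condition the first job of that chain into $I'_{\text{right}}$. By Observation~\ref{obs:welldef}, this moves every job of the chain into a sub-interval of $I'_{\text{right}}$ — so in particular, the conditioned job $j$ leaves $I'$ (it is now assigned to a strictly lower level). The plan is to charge conditionings to "progress" in pushing jobs down the laminar family, or more directly, to count jobs that get permanently removed from the pool of jobs assigned to levels $0,\dots,qk-1$ of the subtree.

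Here is the cleaner accounting I would use. Fix an interval $I'$ at some level $\ell < qk$ in the subtree of $I$. Every time we condition because of a chain in $\mathcal{J}^{(s)}(I')$, the first job $j$ of that chain gets reassigned to a strict sub-interval of $I'_{\text{right}}$ (hence to a level $> \ell$), and by the support-interval update rule it never returns to level $\ell$ or above: once $\ell(j) < $ (current conditioning level), we keep $S_j$ so that $j$ stays at its level, and the fractional support only shrinks (Observation~\ref{obs:condsupp}). So each conditioning at $I'$ permanently removes at least one job from $\mathcal{J}(I')$. Since $|\mathcal{J}(I')| \le$ (number of jobs whose fractional support fits in $I'$) $\le m|I'|$ by the LP load constraint $\sum_j y_{jt} \le m$ summed over $t \in I'$... wait, that bounds jobs whose support is *contained* in $I'$ only loosely. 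Actually the right bound: a job assigned to $I'$ has its support interval inside $I'$, and distinct such jobs — hmm, I cannot bound $|\mathcal{J}(I')|$ by $m|I'|$ directly since many jobs can share the same short support. Instead I would argue as follows.

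**The actual mechanism — the more robust argument.** I expect the intended argument is simpler and does not go interval-by-interval. Every conditioning in Step 1 on interval $I$ consumes one round of Sherali-Adams, and the total is bounded by $m/\delta$ regardless of the level. The natural invariant: consider any time-slot $t \in I$. The LP constraint gives $\sum_{j} y^{(s)}_{jt} \le m$ at every round $s$. When we condition the first job $j$ of a chain $\mathcal{C}$ of length $> \delta|I'|$ (with $I' \ni t$ for the $|I'|$-many slots of $I'$ crossed by the chain) into $I'_{\text{right}}$, all $|\mathcal{C}| > \delta |I'|$ jobs of $\mathcal{C}$ move into $I'_{\text{right}}$ and in particular each occupies time only in the right half. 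I would track the potential $\Phi = \sum_{t \in I} (\text{number of jobs } j \text{ with } t \in S^{(s)}_j)$, or rather something that strictly decreases by at least $\delta|I'| \ge \delta$ — no. Let me reconsider: I think the clean statement is that each conditioning decreases $\sum_j |S^{(s)}_j|$ (total length of support intervals) by at least... no, that's not monotone under the update rule either.

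**What I would actually write.** I would prove it by the job-removal argument made rigorous: show that each conditioning strictly and permanently decreases the quantity $N := |\{j : I^{(s)}(j) \text{ is at level } < qk \text{ in the subtree of } I\}|$ by... no, it doesn't decrease $N$, a job just moves down within the top levels. So the right invariant is $N' := \sum_{j \text{ in top } qk \text{ levels}} (\text{level of } I^{(s)}(j))$, which strictly increases with each conditioning (the conditioned job's level goes up by $\ge 1$, no job's level decreases). The maximum value of $N'$ is at most $qk \le Ck$ times the number of jobs in the top levels. That still gives a bound depending on $n$, too weak. The truly correct bound $m/\delta$ must come from: the number of *distinct* jobs that can ever be "the first job of a detected chain in $I$" at level $0$ (the root of the recursion) — and then recurse. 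Honestly, the cleanest route: at the root interval $I$ itself (level $0$ locally), a chain of length $> \delta|I|$ uses $> \delta|I|$ of the $|I|$ slots, and all its jobs have fractional support in $I$; after conditioning, the first job moves to $I_{\text{right}}$. The number of conditionings done *at level $0$* is at most $m/\delta$ because: the jobs remaining assigned to level $0$ with long chains — each such chain has $> \delta|I|$ jobs, these chains at different conditioning steps are "disjoint enough"... The main obstacle I anticipate: making the charging argument genuinely tight at $m/\delta$ rather than something $n$-dependent; I suspect the paper uses that at each level $\ell$ the conditioned jobs crossing the midpoint of a given interval are limited by the $m$-machine load on that single midpoint slot, so $\le m/\delta$ per interval and the lemma as stated is *per interval $I$ in step 1* meaning per single interval processed, not cumulative — in which case one chain detection $\Rightarrow$ load argument on one slot gives it immediately. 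I would start by reading "step 1" as "one interval" and use the $\sum_j y_{jt} \le m$ constraint on the midpoint slot of $I$: each conditioned job's support after conditioning lies in $I_{\text{right}}$ and contributes to earlier chains crossing $I$'s midpoint; bounding how many times the midpoint can be "used up" gives $m/\delta$.
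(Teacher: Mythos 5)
There is a genuine gap: you circle around the right ingredients but never assemble them, and you explicitly reject the step that makes the count work. The paper's argument is a simple ratio: (i) at the moment the algorithm first conditions in $I$, at most $m|I|$ jobs are assigned to $I$; (ii) no new job ever becomes assigned to $I$ during these conditionings; (iii) each conditioning removes more than $\delta|I|$ jobs from $\mathcal{J}(I)$, not just one. Hence the number of conditionings on $I$ is at most $m|I|/(\delta|I|)=m/\delta$. For (iii), Observation~\ref{obs:welldef} says \emph{every} job of the detected chain $\mathcal{C}$ (which has length more than $\delta|I|$) gets assigned to a sub-interval of $I_{\text{right}}$, i.e.\ to a strictly smaller interval of the laminar family, so the entire chain leaves $\mathcal{J}(I)$. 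You quote this observation but then charge only the single conditioned job per step; that accounting can never beat $m|I|$ and is why all your subsequent potential functions come out $n$-dependent.

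For (i), the bound $|\mathcal{J}^{(s_{\text{old}})}(I)|\le m|I|$, which you dismiss with ``many jobs can share the same short support,'' is in fact correct: a job assigned to $I$ has $F_j^{(s_{\text{old}})}\subseteq S_j^{(s_{\text{old}})}\subseteq I$, so $\sum_{t\in I} y^{(s_{\text{old}})}_{jt}=1$; summing over all jobs assigned to $I$ and using $\sum_j y_{jt}\le m$ for each of the $|I|$ slots gives at most $m|I|$ such jobs. Many jobs sharing a short support is exactly what the machine-capacity constraint forbids in aggregate. For (ii) one uses the $S_j$ update rule: jobs assigned to levels above that of $I$ before these conditionings keep their assignment, and every other job's assigned interval can only shrink to a sub-interval, so $\mathcal{J}(I)$ never gains members. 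Your final fallback --- bounding things by the load on the single midpoint slot of $I$ --- does not work: a job assigned to $I$ need carry no fractional mass at the midpoint slot itself (only its support \emph{interval} straddles the midpoint), so that slot's capacity $m$ does not control the number of chains or conditionings.
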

\begin{proof}
Let $s_{\text{old}}$ denote the round of the algorithm just before we start to condition in $I$, and let $\ell\ge 0$ be such that $I\in \mathcal{I}_{\ell}$. Each time we condition in $I$, we assign at least $\delta |I|$ jobs in $I$ to a sub-interval of $I_{\text{right}}$ (by Observation~\ref{obs:welldef}).

Also, no job assigned to a level $\ell'<\ell$ in round $s_{\text{old}}$ moves down the laminar family during conditionings done in $I$. And for all other jobs, they only get assigned to a sub-interval. Thus no new job is assigned to $I$ while we are conditioning in $I$. 

Using $F_j^{(s_{\text{old}})}\subseteq S_j^{(s_{\text{old}})}$ and the second constraint of LP \eqref{lp}, there can be at most $m|I|$ jobs in total assigned to $I$ in round $s_{\text{old}}$. Thus, the number of times we condition in $I$ is at most
\[ \frac{m|I|}{\delta |I|} =\frac{m}{\delta}. \] 
\end{proof}

\begin{lemma}
\label{lem:condinter}
The algorithm conditions at most $2^{Ck}m/\delta$ times in step $1$ of the algorithm.
\end{lemma}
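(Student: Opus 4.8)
The plan is to combine Lemma~\ref{lem:condinter2}, which caps the number of conditionings on a single interval, with a count of how many distinct intervals step~$1$ can ever touch. The key structural observation is that in step~$1$ the algorithm walks the levels of the laminar family in a single monotone top-down pass, $\ell = 0, 1, \dots, qk-1$, and by the stopping rule we always have $q \le C$ (case~$(b)$ explicitly sets $q = C$, and case~$(a)$ stops earlier). Hence every interval on which step~$1$ conditions lies in one of the levels $0, 1, \dots, Ck-1$.

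I would then simply count these intervals: level $\ell$ contains $|\mathcal{I}_\ell| = 2^\ell$ intervals, so the number of intervals in the top $Ck$ levels is $\sum_{\ell=0}^{Ck-1} 2^\ell = 2^{Ck} - 1 \le 2^{Ck}$. It remains to charge each such interval at most $m/\delta$ conditionings. Because the loop over $\ell$ is a single pass and, within a fixed level, processes each interval of $\mathcal{I}_\ell$ in turn before descending, each interval $I$ in the top $Ck$ levels is ``visited'' exactly once during step~$1$; during that visit Lemma~\ref{lem:condinter2} bounds the number of conditionings in $I$ by $m/\delta$. Multiplying the two bounds gives a total of at most $(2^{Ck}-1)\cdot m/\delta \le 2^{Ck} m/\delta$ conditionings in step~$1$, as claimed.

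The argument is essentially bookkeeping, so I do not expect a genuine obstacle; the only point needing care is the claim that an interval is never revisited — that is, that conditionings performed at deeper levels (or at other intervals of the same level) cannot resurrect a long chain inside an already-processed interval and force the loop to return to it. This is exactly the property already exploited in the proof of Observation~\ref{obs:welldef}: the support intervals $S_j^{(s)}$ are updated so that a job assigned to a level $\ell' < \ell$ before conditioning at level $\ell$ remains assigned to level $\ell'$, while every other job can only move to a sub-interval; and by Observation~\ref{obs:condsupp} conditioning only shrinks fractional supports. Consequently, once the loop finishes processing a level, no job is ever (re)assigned to an interval at that level or above, so no fresh long chain can appear there, and the single-pass count above is valid.
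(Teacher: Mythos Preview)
Your proof is correct and follows essentially the same approach as the paper: combine the per-interval bound of Lemma~\ref{lem:condinter2} with the count of at most $2^{Ck}$ intervals in the top $Ck$ levels. The paper's proof is terser (it simply notes there are at most $2^{qk}\le 2^{Ck}$ intervals and multiplies), whereas your third paragraph about non-revisitation is extra caution that is not strictly needed since the algorithm is already written as a single top-down pass over the levels.
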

\begin{proof}
By Lemma~\ref{lem:condinter2}, we condition at most $m/\delta$ times per interval. As we condition on the topmost $qk$ levels and hence on at most $2^{qk}\le 2^{Ck}$ intervals, we condition at most $2^{Ck}m/\delta$ times in step $1$.
\end{proof}

In step 2 of the algorithm, if we do a recursion of type 1 then we recurse on every interval at level $qk\ge k$. Otherwise, if we do a recursion of type 2 then we recurse on every interval at level $(C-(4m/\epsilon)^2)k\ge k$. In either case we recurse on every interval of some level $\ell \ge k$ and thus on an interval of size at most $T/2^k$. Because the conditionings done in one recursive call are done independently of the conditionings in any other recursive call, the total number of rounds of Sherali-Adams we need can be bounded by the following recurrence:
\[ r(T) \le \frac{2^{Ck}m}{\delta}+r(T/2^k)\]
where the base case is $r(2^{Ck})= 2^{Ck}m$, and thus we get
\begin{eqnarray*}
r(T) &\le&\frac{2^{Ck}m\log T}{\delta} \\
& \le & \frac{8m^2Ck (\log^2 n) 2^{2Ck}}{\epsilon} \\
&  =  & O_{m,\epsilon}(k(\log n)^{4+\frac{64m^2}{\epsilon^2}}) \\
&= & O_{m,\epsilon}((\log n)^{5+\frac{64m^2}{\epsilon^2}})=r.
\end{eqnarray*}

\subsection{Bounding number of jobs discarded}

In this subsection, we bound the number of jobs discarded in the algorithm and show that it is at most $\epsilon T$. We will separately bound the number of jobs discarded due to recursions of type $1$ and recursions of type $2$ and show that each is at most $\epsilon T/2$. The former uses a result proved by Levey and Rothvoss \cite{R16} but which needs to be heavily adapted to our algorithm. The latter uses a simple charging argument. 

\subsubsection{Jobs discarded due to recursions of type 1.}
\label{sec:disc1}
Suppose we perform a recursion of type $1$ when the algorithm is called on the interval $I$ of the laminar family. To be consistent with the notation of \cite{R16}, we will call the set of jobs $\mathcal{J}^{(s)}(\mathcal{B}_{q-1})$ as $\mathcal{J}_{\text{middle}}$, the set of jobs $\mathcal{J}^{(s)}(\mathcal{B}_0) \cup \dots \cup\mathcal{J}^{(s)}(\mathcal{B}_{q-2})$ as $\mathcal{J}_{\text{top}}$ and the jobs in the levels below these as $\mathcal{J}_{\text{bottom}}$ (here we are reindexing the batches such that the first level starts from interval $I$).

\begin{claim}
\label{claim:mid}
\[ |\mathcal{J}_{\text{middle}}| \le \frac{\epsilon}{4m} |\mathcal{J}_{\text{top}}|.\]
\end{claim}
\begin{proof}
Follows from the fact that $\mathcal{B}_{q-1}$ is a good batch and \eqref{eqn:gb}.
\end{proof}

After discarding all the jobs in $\mathcal{J}_{\text{middle}}$, the algorithm recursively finds a partial feasible schedule $\tilde{\sigma}$ of the jobs in $\mathcal{J}_{\text{bottom}}$. Let $\mathcal{J}' \subseteq \mathcal{J}_{\text{bottom}}$ be the set of jobs scheduled by $\tilde{\sigma}$. The algorithm will then attempt to extend $\tilde{\sigma}$ to a schedule $\sigma$ of the jobs in $\mathcal{J}_{\text{top}} \cup \mathcal{J}'$. We will be able to do this by discarding only a few jobs from $\mathcal{J}_{\text{top}}$. More formally:

\begin{lemma}
\label{lem:discardrec1}
When the algorithm is called on an interval $I$, we can extend $\tilde{\sigma}$ to a feasible schedule $\sigma$ of the jobs in $(\mathcal{J}_{\text{top}} \setminus  \mathcal{J}_{\text{discard}}) \cup \mathcal{J}'$ where 
\[|\mathcal{J}_{\text{discard}}| \le \frac{\epsilon |I|}{4\log n}.\]
\end{lemma}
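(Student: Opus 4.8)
Here is the plan. The idea is to run the Levey--Rothvoss fitting procedure, but with two adjustments forced by the fact that our assignment rule may keep a top job pinned to an interval much larger than its current fractional support. I would first record two structural facts. (i) The longest $\prec$-chain inside $\mathcal{J}_{\text{top}}$ has length at most $L:=(q-1)k\cdot\delta|I|\le Ck\,\delta|I|=\frac{\epsilon|I|}{8m2^{Ck}\log n}$: given a chain of top jobs, split it by the level of the interval $I^{(s)}(\cdot)$ each job is assigned to; at a fixed level $\ell\le(q-1)k-1$ the chain meets at most $2^\ell$ level-$\ell$ intervals, and by the stopping rule of Step~1 (together with Observation~\ref{obs:welldef}) each such interval $I'$ contributes at most $\delta|I'|=\delta|I|/2^\ell$ jobs of the chain, so at most $\delta|I|$ per level, hence at most $(q-1)k\,\delta|I|$ in total. (ii) Because the whole batch $\mathcal{B}_{q-1}$ of $k$ levels was discarded, every top job sits in an interval of level $\le(q-1)k-1$ while every job that $\tilde{\sigma}$ schedules sits in an interval of level $\ge qk$, a factor $2^{k+1}=\tfrac{64m}{\epsilon}\log n$ finer; and since $\tilde{\sigma}$ is the union of the independent recursive calls on $\mathcal{I}_{qk}$, each laid out inside its own disjoint time window, the slot $\tilde{\sigma}(i)$ of a bottom job $i$ lies in the time window of the level-$qk$ interval containing $F^{(s)}_i$.

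Next I would define the placement windows. For a top job $j$ let $W_j\subseteq[1,|I|]$ be the set of times $t$ with $\tilde{\sigma}(i)<t$ for every bottom $i\prec j$ and $t<\tilde{\sigma}(i')$ for every bottom $i'\succ j$. Using the third constraint of \eqref{lp} (which gives $r^{(s)}_i<r^{(s)}_j$ whenever $i\prec j$), laminarity, and fact (ii), I would check that $W_j$ is a nonempty interval and that if $i\prec j\prec i'$ with $i,i'$ bottom, then the level-$qk$ interval carrying $i$ lies entirely to the left of the one carrying $i'$ — so scheduling $j$ anywhere in $W_j$ keeps $\sigma$ consistent with $\tilde{\sigma}$ on all precedences, even though we no longer require $\sigma(j)\in F^{(s)}_j$ (this is exactly the ``schedule a top job outside its fractional support'' device).

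Then comes the greedy fit: process $\mathcal{J}_{\text{top}}$ in an order consistent with $\prec$, and for the current job $j$ place it on any slot $t\in W_j$ that is strictly later than $\sigma(j')$ for every already-placed top predecessor $j'$ of $j$ and that currently holds fewer than $m$ jobs, adding $j$ to $\mathcal{J}_{\text{discard}}$ if no such slot exists. Enough room exists in total by the mass bound $|\mathcal{J}_{\text{top}}|\le m|I|-|\mathcal{J}_{\text{bottom}}|$ (apply the second constraint of \eqref{lp} over $[1,|I|]$, noting $\tilde{\sigma}$ uses at most $|\mathcal{J}'|\le|\mathcal{J}_{\text{bottom}}|$ machine-slots), and the same mass bound applied inside each top interval shows the room sits where it is needed. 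Hence the only obstruction to placing $j$ is a precedence/capacity collision, and a list-scheduling count finishes it: when $j$ is discarded, the part of $W_j$ after all of $j$'s top predecessors is completely full, a level-$qk$ block of length $w$ can be blocked only after its $mw$ slots are used, and each discard charged to such a block corresponds to a distinct extension of a chain of top jobs of length at most $L$; carried out over the at most $2^{qk}\le 2^{Ck}$ blocks this yields $|\mathcal{J}_{\text{discard}}|\le 2^{qk}\cdot m\cdot L\le 2^{Ck}\cdot m\cdot\frac{\epsilon|I|}{8m2^{Ck}\log n}=\frac{\epsilon|I|}{8\log n}\le\frac{\epsilon|I|}{4\log n}$.

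The main obstacle, and where the adaptation of \cite{R16} is genuinely needed, is the second step: re-proving that the precedence-only windows $W_j$ are nonempty and correctly placed relative to $\tilde{\sigma}$, and that the chain-based discard count still goes through when top jobs may be placed well outside their (shrunken) fractional support — in particular handling top jobs all of whose local free slots in their assigned interval lie before a bottom predecessor, which must then be pushed to the right of that predecessor. Once the windows and this combinatorial lemma are in hand, the chain-length estimate (i) and the block count are routine bookkeeping with the chosen values of $k$, $\delta$, and $C$.
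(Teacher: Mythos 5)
Your proposal follows the right high-level shape (per-job placement windows compatible with $\tilde{\sigma}$, a chain-length bound, a fit-and-count step), and your chain-length estimate (i) matches Lemma~\ref{lem:chain}. But the core of the lemma is missing. The paper proves this statement by constructing, for each top job $j$, an interval $[r_j,d_j]$ obtained from the \emph{support interval} $S^{(s)}_j$ by chopping off the boundary blocks of $\mathcal{I}_{qk}$, verifying four conditions (the precedence-monotonicity and $\tilde{\sigma}$-compatibility conditions require Lemma~\ref{lem:meta}, a nontrivial structural fact about how conditionings move jobs in the laminar family), and then invoking Theorem~\ref{thm:borrowed}. That theorem's discard bound $\frac{4mT}{2^k}+2^{qk}m\mathcal{C}$ rests on two things your sketch does not supply: first, a Hall's-theorem argument in which the LP solution itself certifies a fractional perfect matching of top jobs to free slots in $G^+$ (this is why condition 4, $\mathcal{F}^{(s)}_j\subseteq ext([r_j,d_j])$, is essential), with the deficiency of the smaller graph $G$ bounded by counting connected components of neighbourhoods, each a union of level-$qk$ blocks anchored at a midpoint $m_j$; and second, the Levey--Rothvoss precedence-repair step that costs $m\mathcal{C}$ per block. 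Your windows $W_j$ are defined purely from bottom-job precedences and are not unions of level-$qk$ blocks, so neither the component-counting bound nor the fractional-matching certificate applies to them; the ``mass bound'' $|\mathcal{J}_{\text{top}}|\le m|I|-|\mathcal{J}_{\text{bottom}}|$ is only a global count and does not establish Hall's condition for every subset $J$ (free capacity could sit outside the windows of the jobs that need it).

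Concretely, two steps would fail as written. The claim that ``the same mass bound applied inside each top interval shows the room sits where it is needed'' is exactly the statement that Hall's condition holds, and it is false without tying the windows to the fractional supports via the LP: nothing in your construction prevents many top jobs from having windows concentrated where $\tilde{\sigma}$ has already consumed the capacity. And the final count ``each discard charged to such a block corresponds to a distinct extension of a chain of top jobs of length at most $L$'' is asserted, not argued; the $2^{qk}m\mathcal{C}$ term in \cite{R16} comes from a delicate repair of an already-found matching, not from a greedy list-scheduling charge. You flag this yourself as ``the main obstacle \dots where the adaptation of \cite{R16} is genuinely needed,'' but that obstacle \emph{is} the lemma: without carrying out the matching argument (and the analogue of Lemma~\ref{lem:meta}, needed because $S^{(s)}_j$ may strictly contain $F^{(s)}_j$), the proof is incomplete.
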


Before going to the proof of Lemma~\ref{lem:discardrec1}, let us first see how it implies that we discard at most $\epsilon T/2$ jobs in all recursions of type $1$.

\begin{lemma}
Total number of jobs discarded in all recursions of type $1$ during the algorithm is at most $\epsilon T/2$.
\end{lemma}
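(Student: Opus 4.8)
The plan is to bound separately the two kinds of jobs that a single type-$1$ recursion discards and to show that each contributes at most $\epsilon T/4$ over the entire run of the algorithm. Recall that a type-$1$ recursion performed when \textsc{Schedule} is called on an interval $I$ discards the jobs of $\mathcal{J}_{\text{middle}}=\mathcal{J}^{(s)}(\mathcal{B}_{q-1})$ together with the jobs of $\mathcal{J}_{\text{discard}}\subseteq\mathcal{J}_{\text{top}}$; so it suffices to bound $\sum|\mathcal{J}_{\text{middle}}|$ and $\sum|\mathcal{J}_{\text{discard}}|$, both sums ranging over all type-$1$ recursions.

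For the $\mathcal{J}_{\text{middle}}$ term I would first argue that every job is ``processed'' — scheduled or permanently discarded — at exactly one node of the recursion tree. Indeed, at a type-$1$ node the jobs currently present split into $\mathcal{J}_{\text{top}}$, $\mathcal{J}_{\text{middle}}$, $\mathcal{J}_{\text{bottom}}$; only the $\mathcal{J}_{\text{bottom}}$ jobs are passed to deeper calls, whereas $\mathcal{J}_{\text{top}}$ jobs are scheduled or placed in $\mathcal{J}_{\text{discard}}$ and $\mathcal{J}_{\text{middle}}$ jobs are discarded at that node (the analogous statement holds at type-$2$ nodes, and base cases schedule all remaining jobs). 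Moreover a job handed to the children of a recursing node descends into exactly one child, since its support interval in the round passed to the children lies inside exactly one of the pairwise disjoint child intervals. Hence the sets $\mathcal{J}_{\text{top}}$ attached to distinct type-$1$ recursions are pairwise disjoint, so $\sum|\mathcal{J}_{\text{top}}|\le n$. Combining this with Claim~\ref{claim:mid} and with $n\le mT$ — which follows by summing $\sum_t y_{jt}=1$ over $j$ and using $\sum_j y_{jt}\le m$ — gives $\sum|\mathcal{J}_{\text{middle}}|\le\frac{\epsilon}{4m}\sum|\mathcal{J}_{\text{top}}|\le\frac{\epsilon}{4m}n\le\frac{\epsilon T}{4}$.

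For the $\mathcal{J}_{\text{discard}}$ term, Lemma~\ref{lem:discardrec1} gives $|\mathcal{J}_{\text{discard}}|\le\frac{\epsilon|I|}{4\log n}$ for a type-$1$ recursion on $I$, so I need $\sum_{\text{type-}1}|I|\le T\log n$. Here I would group the recursion-tree nodes by their depth: the children of a recursing node on $I$ are exactly the laminar intervals of some fixed level that partition $I$, so by induction the intervals occurring at any fixed depth are pairwise disjoint subintervals of $[1,T]$ and their lengths sum to at most $T$. Since each recursion step descends by at least $k\ge 1$ laminar levels while the laminar family has at most $\log T\le\log n$ levels, recursions occur at at most $\log n$ distinct depths, whence $\sum_{\text{all recursing nodes}}|I|\le T\log n$ and in particular $\sum_{\text{type-}1}|I|\le T\log n$. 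Therefore $\sum|\mathcal{J}_{\text{discard}}|\le\frac{\epsilon}{4\log n}\cdot T\log n=\frac{\epsilon T}{4}$, and adding the two estimates proves the lemma.

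The step I expect to be the main obstacle is the disjointness claim in the second paragraph — that no job lies in $\mathcal{J}_{\text{top}}$ of two different type-$1$ recursions. It relies on the careful bookkeeping of the support intervals $S_j^{(s)}$: the invariant $F_j^{(s)}\subseteq S_j^{(s)}\subseteq F_j^{(r)}$, the fact that $S_j^{(s)}$ only shrinks into a single child interval when we descend, and the fact that once a job leaves $\mathcal{J}_{\text{bottom}}$ it is never passed further down. The remaining ingredients, including the per-depth disjointness used for the $\mathcal{J}_{\text{discard}}$ sum and the arithmetic, are routine.
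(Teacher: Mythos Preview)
Your proposal is correct and follows essentially the same approach as the paper: split the discarded jobs into the $\mathcal{J}_{\text{middle}}$ contribution (bounded via Claim~\ref{claim:mid} and $\sum|\mathcal{J}_{\text{top}}|\le n\le mT$) and the $\mathcal{J}_{\text{discard}}$ contribution (bounded via Lemma~\ref{lem:discardrec1} and $\sum|I|\le T\log n$), each giving $\epsilon T/4$. The paper's write-up is terser---it asserts the two sums without spelling out the disjointness of the $\mathcal{J}_{\text{top}}$ sets or of the per-level intervals---and for the second term it groups intervals by their level in the laminar family rather than by depth in the recursion tree, but the content is the same.
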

\begin{proof}
Using Claim~\ref{claim:mid} and Lemma~\ref{lem:discardrec1}, if we perform a recursion of type $1$ when the algorithm is called on the interval $I$, the number of jobs discarded is at most
\[  \frac{\epsilon}{4m} |\mathcal{J}_{\text{top}}|+ \frac{\epsilon |I|}{4\log n} .\]
Over all recursions of type $1$, the first term sums up to at most $\epsilon n/4m \le \epsilon T/4$. For any $\ell\ge 0$, the second term sums up to $\frac{\epsilon T}{4\log n}$ over all intervals $I\in \mathcal{I}_{\ell}$. As there are at most $\log n$ levels, the second term also sums up to $\epsilon T/4$ over all recursions of type $1$. 
\end{proof}

We now come to the proof of Lemma~\ref{lem:discardrec1}. Without loss of generality and for easier notation, we will take $I=[T]$. A similar result was proved in \cite{R16} but we need to adapt their result to our setting before we can use it. Let us first mention what they proved. We need a bit of notation before that.

Let the intervals in $\mathcal{I}_{qk}$ be $I_1, \dots, I_{2^{qk}}$ where $I_p=[ (p-1)\frac{T}{2^{qk}} +1 ,p\frac{T}{2^{qk}} ]$. For any time interval $A=I_a\cup I_{a+1}\cup \dots\cup I_b$ where $a\le b$, define
\[ext(A) = I_{\max\{a-1,1\}}\cup \dots\cup I_{\min\{b+1,2^{qk}\}} .\]
In other words, we just extend $A$ by one interval from $\mathcal{I}_{qk}$ at either end if possible. 

For a job $j\in \mathcal{J}_{\text{top}}$, denote the interval it is assigned to by $I(j)$ ($s$ is implicitly fixed as we do not condition in this step) and let $m_j$ denote the midpoint of $I(j)$.

The following theorem is proved in $\cite{R16}$ though not stated in this form. For this reason, we show its proof in Appendix.
\begin{theorem}\cite{R16}
\label{thm:borrowed}
Suppose we are given a feasible schedule $\tilde{\sigma}$ of the jobs in $\mathcal{J}'\subseteq \mathcal{J}_{\text{bottom}}$ and let the maximum chain length among jobs in $\mathcal{J}_{\text{top}}$ be $\mathcal{C}$. Suppose we are also given for each $j\in \mathcal{J}_{\text{top}}$, an interval $[r_j,d_j]$ such that: 
\begin{enumerate}
\item $[r_j,d_j]=I_a\cup I_{a+1}\cup \dots\cup I_b$ for some $a\le b$.
\item if $j\prec i$ for some $i\in \mathcal{J}_{\text{top}}$, then $r_j\le r_i$ and $d_j\le d_i$.
\item if $j\prec i$ or $i\prec j$ for some $i\in \mathcal{J}'$, then $\tilde{\sigma}(i)\not\in[r_j,d_j]$.
\item $\mathcal{F}^{(s)}_j\subseteq ext([r_j,d_j])$ and $m_j$ lies in the interior of $ext([r_j,d_j])$.
\end{enumerate}
Then, we can extend $\tilde{\sigma}$ to a feasible schedule $\sigma$ of $(\mathcal{J}_{\text{top}} \setminus \mathcal{J}_{\text{discard}}  ) \cup\mathcal{J}'$ where
\[ | \mathcal{J}_{\text{discard}} | \le  \frac{4mT}{2^k} + 2^{qk}m\mathcal{C} \]
and every $j\in \mathcal{J}_{\text{top}} \setminus \mathcal{J}_{\text{discard}} $ is scheduled in the interval $[r_j,d_j].$
\end{theorem}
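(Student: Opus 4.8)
The plan is to prove Theorem~\ref{thm:borrowed} by the two‑phase strategy of \cite{R16}, re‑derived for our window definition. In the first phase I would \emph{coarsely place} each top job: for $j\in\mathcal{J}_{\mathrm{top}}$ decide a single interval $I_{p(j)}\in\mathcal{I}_{qk}$ with $I_{p(j)}\subseteq[r_j,d_j]$ (this makes sense by condition~$1$) so that (a) no $I_p$ receives more jobs than it has time slots left idle by $\tilde\sigma$, and (b) $j\prec i$ implies $p(j)\le p(i)$. In the second phase I would, inside each $I_p\in\mathcal{I}_{qk}$ \emph{separately}, run Graham's list‑scheduling to place the jobs committed to $I_p$ into the idle slots of $\tilde\sigma$ within $I_p$; feasibility across the boundary holds because by condition~$3$ any job of $\mathcal{J}'$ comparable to $j$ is scheduled outside $[r_j,d_j]\supseteq I_p$, and feasibility across distinct $I_p$'s follows from the monotonicity~(b). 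The jobs put in $\mathcal{J}_{\mathrm{discard}}$ are the ones that cannot be committed in phase one plus the few that list‑scheduling fails to place in phase two, and every surviving top job ends up in $I_{p(j)}\subseteq[r_j,d_j]$, as required.

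For the coarse placement I would process the top jobs in a $\prec$‑respecting order and commit each greedily to the earliest‑right‑endpoint interval of its window that is not yet full; condition~$2$ forces $r_j\le r_i$ and $d_j\le d_i$ whenever $j\prec i$, so (after breaking ties in favour of predecessors) this rule is automatically monotone, and it discards a job only when every interval of its window is saturated. To bound the number of such jobs I would use the fractional solution: for a contiguous block $B=I_a\cup\dots\cup I_b$ of level‑$qk$ intervals, the number of top jobs whose whole window lies in $B$ is, by condition~$4$, at most $\sum_{t\in ext(B)}\sum_{j}y^{(s)}_{jt}$ less the contribution of the $\mathcal{J}'$‑jobs that $\tilde\sigma$ places in $B$ (these the recursion keeps fractionally supported inside their $\mathcal{I}_{qk}$‑interval, hence inside $B$); since $\sum_j y^{(s)}_{jt}\le m$, this is at most $m\,|ext(B)|-|\mathcal{J}'\cap\tilde\sigma^{-1}(B)|$, which exceeds the free capacity $m|B|-|\mathcal{J}'\cap\tilde\sigma^{-1}(B)|$ of $B$ by only $m(|ext(B)|-|B|)=O(mT/2^{qk})$. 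A level‑by‑level charging of the discarded jobs to over‑full blocks — this is where the $k$‑level gap created by deleting the middle batch $\mathcal{B}_{q-1}$ is used, exactly as in \cite{R16}, to keep the relevant blocks coarse — then collapses these $O(mT/2^{qk})$ terms to a total of $O(mT/2^{k})$, giving the bound $\tfrac{4mT}{2^{k}}$ on the phase‑one discards. For phase two, inside a fixed $I_p$ the committed top jobs contain a chain of length at most $\mathcal{C}$, so Graham's argument (each super‑slot of $I_p$ either fully uses the idle slots of $\tilde\sigma$ or witnesses one element of a chain) leaves at most $m\mathcal{C}$ of them unplaced; summing over the $2^{qk}$ intervals contributes the $2^{qk}m\mathcal{C}$ term.

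The step I expect to be the main obstacle is the second half of the counting: reconciling the volume/Hall comparison with the \emph{fixed} schedule $\tilde\sigma$ and with our altered window definition. Unlike in \cite{R16}, a top job's window is only required to contain its fractional support after an $ext(\cdot)$ enlargement (condition~$4$), and the idle capacity of an $\mathcal{I}_{qk}$‑interval is dictated by an already‑committed recursive schedule rather than by the LP; making ``demand $\le$ capacity'' go through therefore relies on (i) the fact that a $\mathcal{J}'$‑job placed in $I_p$ by $\tilde\sigma$ is fractionally supported inside $I_p$, so it cancels from both sides of the inequality, and (ii) Observation~\ref{obs:condsupp}, which guarantees that the round‑$s$ solution used for the count is consistent with the (independent) conditionings performed in the recursive calls that produced $\tilde\sigma$. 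Once these two facts are in place, the remaining ingredients — the monotone earliest‑deadline commitment, the block‑charging, and the local Graham analysis — are routine adaptations of \cite{R16}.
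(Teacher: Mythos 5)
Your architecture matches the paper's: a Hall-type volume comparison between the extended windows $ext([r_j,d_j])$ (which contain the fractional supports and hence certify enough capacity) and the actual windows $[r_j,d_j]$ (which certify compatibility with $\tilde\sigma$ via condition 3), followed by a within-interval step losing $m\mathcal{C}$ jobs per interval of $\mathcal{I}_{qk}$. Replacing the paper's maximum matching by a monotone earliest-right-endpoint greedy, and its citation of Theorem 10 of \cite{R16} by explicit list scheduling, is a legitimate cosmetic variation; and your per-block accounting --- demand of a block $B$ at most $m|ext(B)|$ minus the $\mathcal{J}'$-jobs placed in $B$, versus free capacity $m|B|$ minus the same quantity --- is exactly the paper's bound of $2mT/2^{qk}$ on $|N^+(J)|-|N(J)|$ per connected component of $N(J)$.

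The gap is precisely the step you defer to a ``level-by-level charging \dots [that] collapses these $O(mT/2^{qk})$ terms to a total of $O(mT/2^{k})$'': you never argue why only about $2^{(q-1)k}$ blocks can be saturated, and a priori there are $2^{qk}$ intervals in $\mathcal{I}_{qk}$, which would yield only the useless bound $2mT$. The mechanism in the paper is the second clause of condition 4, which your proof never invokes: $m_j$, the midpoint of the top-level interval $I(j)$ to which $j$ is assigned, lies in the interior of $ext([r_j,d_j])$, so $[r_j,d_j]$ contains $m_j$ or $m_j+1$. Consequently the windows of all top jobs assigned to a fixed top-level interval share a common point, their union is a single connected block, and the number of distinct blocks is at most the number of top-level intervals, $\sum_{\ell=0}^{(q-1)k-1}2^{\ell}\le 2^{(q-1)k}$; this is exactly where the $k$-level gap created by deleting $\mathcal{B}_{q-1}$ enters, giving $2^{(q-1)k}\cdot 2mT/2^{qk}=2mT/2^{k}$. (The paper's remaining $2mT/2^{k}$ covers jobs whose unextended window is empty, a case your phase one silently assumes away; under the hypothesis $a\le b$ this is moot, but in the application the footnoted case $S'_j=\phi$ does occur and is handled by discarding those jobs up front.) Without the midpoint argument, or an explicit substitute for it, your charging step does not close and the claimed $4mT/2^{k}$ bound is unproved.
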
 
%

In order to use the above Theorem, we need to find $r_j$ and $d_j$ satisfying the above four conditions. Before that, we first prove an easy bound on the length of a chain in $\mathcal{J}_{\text{top}} $. 

\begin{lemma}
\label{lem:chain}
The maximum chain length in $\mathcal{J}_{\text{top}} $ is at most $Ck\delta T$.
\end{lemma}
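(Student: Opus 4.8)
The plan is to bound the chain length by decomposing any chain in $\mathcal{J}_{\text{top}}$ according to which level (i.e.\ which batch) each job is assigned to, and then to use the fact that \emph{within} a single interval the chains have been shortened to length at most $\delta|I|$ by the conditionings in Step~1. First I would observe that $\mathcal{J}_{\text{top}} = \mathcal{J}^{(s)}(\mathcal{B}_0)\cup\dots\cup\mathcal{J}^{(s)}(\mathcal{B}_{q-2})$, so every job of $\mathcal{J}_{\text{top}}$ is assigned to one of the top $(q-1)k \le Ck$ levels. Fix a chain $j_1\prec j_2\prec\dots\prec j_L$ in $\mathcal{J}_{\text{top}}$ and group its jobs by the interval of the laminar family to which they are assigned.

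The key point is a per-interval bound: if two jobs $j_a\prec j_b$ of the chain are both assigned to the same interval $I'$, then all jobs of the chain between them in the precedence order must also lie in (a sub-interval of) $I'$, because the fractional support only shrinks under conditioning and the precedence constraints in LP~\eqref{lp} force $F^{(s)}_{j_a}$ to lie (weakly) before $F^{(s)}_{j_b}$; hence these jobs form a chain inside $\mathcal{J}^{(s)}(I')$. After Step~1 the algorithm has conditioned on every interval in the top $qk$ levels so that no remaining interval $I'$ in those levels contains a chain longer than $\delta|I'|$ (this is exactly the stopping condition of the inner loop together with Observation~\ref{obs:welldef}; note $S^{(s)}_j = F^{(s)}_j$ for all these jobs since we conditioned top-to-bottom and did not condition again in Step~2). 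Therefore the chain uses at most $\delta|I'|$ jobs from each interval $I'$ it touches.

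Now I would sum over levels: at level $\ell$ each interval has size $|I'| = T/2^\ell$, and a single chain, being a totally ordered set of time-disjoint fractional supports, can pass through at most $2^\ell$ intervals of level $\ell$ — but in fact it is cleaner to note that the chain's jobs at level $\ell$ lie in time-disjoint intervals each contributing at most $\delta\cdot(T/2^\ell)$ jobs, and a chain can hit at most $\lceil \text{(something)}\rceil$... the simplest correct bound is that within level $\ell$ the chain hits a set of intervals whose total length is at most $T$, so it contributes at most $\delta T$ jobs from level $\ell$. Summing over the at most $Ck$ relevant levels gives maximum chain length at most $Ck\cdot\delta T$, as claimed.

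The main obstacle I anticipate is the bookkeeping in the middle step: making precise why a chain restricted to one interval $I'$ is genuinely a chain inside $\mathcal{J}^{(s)}(I')$ (so that the $\delta|I'|$ bound applies), and why summing the interval-lengths over one level gives at most $T$ rather than something larger — i.e.\ ensuring we do not double-count a job that could in principle be "assigned" to a coarser interval. This is handled by the assignment rule (each job is assigned to the \emph{smallest} laminar interval containing $S^{(s)}_j$, so the assignment partitions $\mathcal{J}_{\text{top}}$ across intervals) together with the observation that the jobs of a chain assigned to level exactly $\ell$ have pairwise time-disjoint (hence disjoint-interval) fractional supports, so the intervals they occupy at level $\ell$ are distinct and their lengths sum to at most $2^\ell\cdot(T/2^\ell)=T$. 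Everything else is the routine summation $\sum_{\ell=0}^{Ck-1}\delta T = Ck\delta T$.
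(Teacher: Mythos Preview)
Your proof takes the same approach as the paper: bound the chain contribution from each interval $I'$ by $\delta|I'|$, sum over the intervals at a fixed level to get $\delta T$, then sum over at most $Ck$ levels. The paper's proof is just these three lines.

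You overcomplicate the middle step and make two side claims that are actually false, though fortunately neither is load-bearing. First, it is \emph{not} true that $S^{(s)}_j = F^{(s)}_j$ for all $j\in\mathcal{J}_{\text{top}}$ after Step~1: when the algorithm conditions at a level $\ell$ strictly below the level of $j$, the update rule deliberately extends $S^{(s)}_j$ beyond $F^{(s)}_j$ to keep $j$ pinned to its old interval. Second, if $j_a\prec j_c\prec j_b$ with $j_a,j_b$ assigned to $I'$, the job $j_c$ need not lie in a \emph{sub}-interval of $I'$; its $S^{(s)}_{j_c}$ may well be a strict super-interval. Neither claim is needed: the restriction of any chain to the jobs assigned to $I'$ is trivially a chain in $\mathcal{J}^{(s)}(I')$ (any subset of a chain is a chain), so the stopping condition of Step~1 directly gives at most $\delta|I'|$ such jobs, and $\sum_{I'\in\mathcal{I}_\ell}\delta|I'|=\delta T$ since the level-$\ell$ intervals partition $[T]$. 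Drop the detour and the argument is clean.
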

\begin{proof}
Each interval $I\in \mathcal{I}_\ell$ for $\ell \in [0,qk-1]$ has maximum chain length at most $\delta |I|$. Thus the maximum chain length in $ \mathcal{I}_\ell$ for $\ell \in [0,qk-1]$ is at most $\sum_{I\in\mathcal{I}_{\ell}} \delta |I|=\delta T$ and hence the maximum chain length in $\mathcal{J}_{\text{top}} = \bigcup_{\ell=0}^{(q-1)k-1}\mathcal{J}^{(s)}( \mathcal{I}_\ell)$ is at most $(q-1)k \delta T \le Ck\delta T$.
\end{proof}

We now find $r_j$ and $d_j$ satisfying the conditions of Theorem~\ref{thm:borrowed}. Recall that $S^{(s)}_j\subseteq I(j)$ and $I(j)$ is the smallest interval in the laminar family to satisfy this. Let $t_r(j)$ be the minimum index of the interval in $\mathcal{I}_{qk}$ which intersects $S^{(s)}_j$, that is, $t_r(j)= \min\{ p : I_p \cap S^{(s)}_j \neq \phi\}$. Similarly let $t_d(j)$ be the maximum index of the interval in $\mathcal{I}_{qk}$ which intersects $S^{(s)}_j$. Define
\[S'(j) =  I_{t_r(j)+1} \cup \dots \cup  I_{t_d(j)-1} .\] 
In other words, $S'(j)$ is obtained by chopping off from $S^{(s)}_j$ the first and the last intervals in $\mathcal{I}_{qk}$ intersecting $S^{(s)}_j$. We set $[r_j,d_j]:=S'_j$ for each $j\in \mathcal{J}_{\text{top}} $. \footnote{It is possible that $S'_j=\phi$ in which case we can take $r_j=d_j=m_j$. All these jobs will be discarded in Theorem~\ref{thm:borrowed}.}

Because $S^{(s)}_j\subseteq ext([r_j,d_j])$, conditions 1 and 4 in Theorem~\ref{thm:borrowed} follow straightaway. It only remains to prove conditions 2 and 3. We start with a useful lemma first. 

\begin{lemma}
\label{lem:meta}
Given any jobs $i$ and $j$ such that $j\prec i$. Then, in any round $s$ of the algorithm, $i$ cannot be assigned to a sub-interval of $I^{(s)}(j)_{\text{left}}$, and $j$ cannot be assigned to a sub-interval of $I^{(s)}(i)_{\text{right}}$.
\end{lemma}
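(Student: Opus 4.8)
The plan is to prove both halves of the statement by exploiting the third (precedence) constraint of LP~\eqref{lp} together with the way support intervals $S^{(s)}_j$ relate to the fractional support intervals $F^{(s)}_j$. First, I would record the basic LP fact: since $j \prec i$, the constraint $\sum_{t' \le t} y^{(s)}_{jt'} \ge \sum_{t'\le t+1} y^{(s)}_{it'}$ holds in every round $s$ (it is a constraint of the starting LP, hence preserved under conditioning). Taking $t$ to be the largest time with $y^{(s)}_{it'} > 0$ for some $t' \le t+1$, this forces $y^{(s)}_{jt'} > 0$ for some $t' \le t$; more usefully, it gives $r^s_j \le r^s_i$ (the earliest time in $j$'s fractional support is no later than the earliest time in $i$'s) and $d^s_j < d^s_i$ wait — more precisely $d^s_j \le d^s_i$, and in fact $d^s_j$ is strictly less, but $\le$ suffices. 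So $F^{(s)}_j$ ``starts no later and ends no later'' than $F^{(s)}_i$.

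Next I would pass from $F$ to $S$. The key structural invariant maintained by the algorithm is $F^{(s)}_j \subseteq S^{(s)}_j \subseteq F^{(r)}_j$, and, crucially, the way $S^{(s)}_j$ is updated in Step~1: whenever $S^{(s)}_j \ne F^{(s)}_j$, the interval $S^{(s)}_j$ is obtained from $F^{(s)}_j$ by extending it only \emph{toward the midpoint} $m_j$ of the interval $I^{(s_{\text{old}})}(j)$ to which $j$ was assigned before that round's conditionings — either to $[t_r, m_j+1]$ or to $[m_j, t_d]$. In particular $S^{(s)}_j$ is always contained in the interval $j$ was assigned to at the start of the round, and the extension never pushes the left endpoint left of $r^s_j$ nor the right endpoint right of $d^s_j$ beyond the boundary it already had from the previous assignment. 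I would argue that this implies the left endpoint of $S^{(s)}_j$ is at most the left endpoint of $S^{(s)}_i$, and the right endpoint of $S^{(s)}_j$ is at most the right endpoint of $S^{(s)}_i$; this needs a short induction on rounds, using that if $i$ moved down the laminar family in some round then so did $j$ (or $j$ stayed above, in which case its endpoints only got more restrictive toward a midpoint that lies weakly to the left of $i$'s region). The honest statement of the ordering of endpoints of $S^{(s)}_i, S^{(s)}_j$ is what the induction delivers.

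Finally, I would turn the endpoint ordering into the assignment statement. Suppose for contradiction that $i$ is assigned to a sub-interval $I'$ of $I^{(s)}(j)_{\text{left}}$. Then $S^{(s)}_i \subseteq I' \subseteq I^{(s)}(j)_{\text{left}}$, so the right endpoint of $S^{(s)}_i$ is at most the midpoint of $I^{(s)}(j)$. But the left endpoint of $S^{(s)}_i$ is $\ge$ the left endpoint of $S^{(s)}_j$, which lies in $I^{(s)}(j)$; and since $S^{(s)}_j \subseteq I^{(s)}(j)$ and $I^{(s)}(j)$ is the \emph{smallest} laminar interval containing $S^{(s)}_j$, $S^{(s)}_j$ cannot fit inside $I^{(s)}(j)_{\text{left}}$, so $S^{(s)}_j$ must contain a point strictly past the midpoint, i.e.\ the right endpoint of $S^{(s)}_j$ exceeds the midpoint. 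Combined with ``right endpoint of $S^{(s)}_j \le$ right endpoint of $S^{(s)}_i \le$ midpoint'', this is a contradiction. The symmetric argument (using left endpoints and $I^{(s)}(i)_{\text{right}}$) handles the second half: if $j$ were assigned to a sub-interval of $I^{(s)}(i)_{\text{right}}$, its left endpoint would exceed the midpoint of $I^{(s)}(i)$, while $S^{(s)}_i$ — not fitting in $I^{(s)}(i)_{\text{right}}$ — must reach strictly left of that midpoint, contradicting ``left endpoint of $S^{(s)}_j \le$ left endpoint of $S^{(s)}_i$''.

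The main obstacle I anticipate is the middle step: carefully checking that the endpoint ordering of the $S$-intervals is preserved under the somewhat delicate update rule of Step~1, since $S^{(s)}_j$ can be \emph{larger} than $F^{(s)}_j$ and the midpoint $m_j$ used in the update is the midpoint of $j$'s \emph{old} assigned interval, which need not coincide with anything in $i$'s history. The clean way through is to maintain, as an inductive invariant across all rounds, the conjunction ``$r^s_j \le r^s_i$ \emph{and} left endpoint of $S^{(s)}_j \le$ left endpoint of $S^{(s)}_i$ \emph{and} right endpoint of $S^{(s)}_j \le$ right endpoint of $S^{(s)}_i$'', and to case-split each update on whether $\ell(j) < \ell$, $\ell(i) < \ell$, etc., noting that by Observation~\ref{obs:welldef}-style reasoning $j$ never lies strictly below $i$ in the laminar family, so whenever $i$'s interval is refined, $j$'s is too (or $j$ is already frozen above at a level whose intervals sit weakly to the left).
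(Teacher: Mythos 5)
Your overall strategy -- derive $r^s_j<r^s_i$ and $d^s_j<d^s_i$ from the precedence constraint of LP~\eqref{lp}, lift this to an endpoint ordering of the support intervals $S^{(s)}_j,S^{(s)}_i$, and then contradict the fact that $S^{(s)}_j$ must straddle the midpoint of $I^{(s)}(j)$ -- is genuinely different from the paper's argument, and your first and last steps are sound. The paper instead argues directly by contradiction: if $i$ were assigned inside $I^{(s)}(j)_{\text{left}}$, then $F^{(s)}_j\subseteq I^{(s)}(j)_{\text{left}}$, so one looks at the \emph{last} round $s'$ in which $j$ still had fractional mass in $I^{(s)}(j)_{\text{right}}$; the conditioning that removed it was at a level below $\ell(j)$ (that is exactly what the $S$-update rule guarantees), which forces $i$ to have already been confined to $I^{(s)}(j)_{\text{left}}$ at round $s'$, contradicting feasibility of $y^{(s')}$. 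That argument avoids any global invariant on the $S$-intervals.

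The genuine gap in your plan is the middle step, which you yourself flag as the main obstacle but do not close. The invariant ``$\mathrm{left}(S^{(s)}_j)\le\mathrm{left}(S^{(s)}_i)$ and $\mathrm{right}(S^{(s)}_j)\le\mathrm{right}(S^{(s)}_i)$'' is \emph{not} a consequence of the $F$-endpoint ordering plus the update rule alone. Concretely, suppose a conditioning at level $\ell$ has $\ell(i)<\ell\le\ell(j)$ (so $i$ is frozen above and $j$ is not) and afterwards $F^{(s)}_i\subseteq I^{(s_{\text{old}})}(i)_{\text{right}}$; then $S^{(s)}_i=[m_i,d^s_i]$ while $S^{(s)}_j=F^{(s)}_j$, and the needed inequality $r^s_j\le m_i$ does not follow from $r^s_j<r^s_i$ -- it requires knowing that $j$ is not assigned to a sub-interval of $I^{(s)}(i)_{\text{right}}$, i.e., the very lemma you are proving. (The symmetric case with the roles of left/right swapped is equally problematic.) So the induction must be a \emph{joint} induction on the round, carrying both the lemma and the endpoint invariant, and you would need to verify that the lemma at round $s+1$ together with the $F$-ordering at round $s$ suffices to re-establish both at round $s$; this is doable but is precisely the content you have deferred. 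Moreover, the supporting claim you lean on -- that ``$j$ never lies strictly below $i$ in the laminar family'' -- is false: if $j\prec i$ but $j$ has a tiny fractional support (say a single time slot) while $i$'s support spans the midpoint of a large interval, then $j$ is assigned far below $i$. Observation~\ref{obs:welldef} says nothing of this kind. As written, the proposal therefore does not constitute a proof; either carry out the joint induction with the case analysis above, or switch to the paper's ``last round with support in the right half'' argument, which sidesteps the invariant entirely.
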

\begin{proof}
Suppose to the contrary that in some round $s$, $i$ is assigned to a sub-interval of $I^{(s)}(j)_{\text{left}}$. Let $\ell$ be such that $I^{(s)}(j)\in\mathcal{I}_{\ell}$. 

Observe that $j$ cannot have any fractional support in $I^{(s)}(j)_{\text{right}}$, as then we would have a non-zero fraction of $j$ scheduled after $i$ has been fully scheduled, which contradicts the feasibility of the LP. Thus it must be the case that $\mathcal{F}^{(s)}_j\subseteq I^{(s)}(j)_{\text{left}}$.

Let $s'$ denote the last round of the algorithm when $j$ had a non-zero fractional support in $I^{(s)}(j)_{\text{right}}$. $s'$ is well defined because of the fact that $j$ is assigned to $I^{(s)}(j)$. The conditioning done after $s'$ which made $\mathcal{F}^{(s'-1)}_j \subseteq I^{(s)}(j)_{\text{left}} $ must have happened on an interval at a level below $\ell$ because $j$ remains assigned to level $\ell$. This means that if in round $s'$, $i$ was assigned to a level $\ell'\le\ell$, then $i$ also stays assigned to $\ell'$ from then on, and thus cannot get assigned to a sub-interval of $I^{(s)}(j)_{\text{left}}$ in round $s$. So it must be that in round $s'$, $i$ was assigned to a level $\ell'>\ell$.

But this implies that in round $s'$, $j$ had a non-zero fractional support in $I^{(s)}(j)_{\text{right}}$ while $\mathcal{F}^{(s')}_i \subseteq I^{(s)}(j)_{\text{left}}$. This contradicts the feasibility of the LP solution $y^{(s')}$. 

The other part of the lemma that $j$ cannot be assigned to a sub-interval of $I^{(s)}(i)_{\text{right}}$ follows similarly.
\end{proof}

We can now show that conditions 2 and 3 of Theorem~\ref{thm:borrowed} are satisfied. 

\begin{lemma}
If $i,j \in \mathcal{J}_{\text{top}} $ such that $j\prec i$, then $r_j\le r_i$ and $d_j\le d_i$. Thus, condition 2 is satisfied.
\end{lemma}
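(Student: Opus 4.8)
The goal is to show that the intervals $[r_j,d_j] = S'_j$, obtained by chopping off the first and last $\mathcal{I}_{qk}$-subintervals from $S^{(s)}_j$, respect the precedence order: if $j \prec i$ then $r_j \le r_i$ and $d_j \le d_i$. Since $S'_j$ is built monotonically from the indices $t_r(j), t_d(j)$ (the min/max indices of $\mathcal{I}_{qk}$-intervals meeting $S^{(s)}_j$), it suffices to prove $t_r(j) \le t_r(i)$ and $t_d(j) \le t_d(i)$, which in turn will follow from a comparison of the assigned intervals $I^{(s)}(j)$ and $I^{(s)}(i)$ together with Lemma~\ref{lem:meta}. So the plan is: (1) reduce the claim about $r_j, d_j$ to a claim about $S^{(s)}_j$ versus $S^{(s)}_i$; (2) case-analyze the relative position of $I(j) := I^{(s)}(j)$ and $I(i) := I^{(s)}(i)$ in the laminar family using Lemma~\ref{lem:meta}; (3) in each case, conclude the inequalities on the interval indices.

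First I would fix the round $s$ (which is implicitly constant during Step~2, as noted before the statement) and recall that $S^{(s)}_j \subseteq I(j)$, with $I(j)$ the smallest laminar interval containing it, and similarly for $i$. By Lemma~\ref{lem:meta}, $i$ is not assigned to a sub-interval of $I(j)_{\text{left}}$ and $j$ is not assigned to a sub-interval of $I(i)_{\text{right}}$. Since the laminar family is a binary tree, either $I(j)$ and $I(i)$ are nested, or they are disjoint. If they are disjoint, then they lie in incomparable subtrees; the only way both $j \prec i$ holds and the LP third constraint is respected is that $I(j)$ lies entirely to the left of $I(i)$ (every fractional time of $j$ precedes every fractional time of $i$ is too strong, but certainly $S^{(s)}_j \subseteq F^{(r)}_j$ sits in $I(j)$ and the fractional support of $i$ sits in $I(i)$; if $I(i)$ were left of $I(j)$ the precedence constraint would be violated). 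Hence every $\mathcal{I}_{qk}$-interval meeting $S^{(s)}_j$ has index $\le$ every one meeting $S^{(s)}_i$, giving $t_r(j) \le t_r(i)$ and $t_d(j) \le t_d(i)$ immediately, and in fact $d_j < r_i$ in a strong sense.

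The nested case is the main obstacle and the place to be careful. Suppose $I(i) \subseteq I(j)$ (the case $I(j) \subseteq I(i)$ is symmetric). By Lemma~\ref{lem:meta}, $i$ cannot be assigned to a sub-interval of $I(j)_{\text{left}}$, so $I(i) \subseteq I(j)_{\text{right}}$ or $I(i) = I(j)$. If $I(i) = I(j)$, then $S^{(s)}_j$ and $S^{(s)}_i$ both sit in the same smallest interval $I(j)$; here I would appeal directly to the LP feasibility of $y^{(s)}$: since $j \prec i$, constraint three of \eqref{lp} forces $r^s_j \le r^s_i$ and $d^s_j < d^s_i$, hence $F^{(s)}_j$ starts no later and ends strictly earlier than $F^{(s)}_i$; but $S^{(s)}_j$ and $S^{(s)}_i$ are obtained from $F^{(s)}_j, F^{(s)}_i$ by (possibly) extending the left endpoint leftward to a midpoint or the right endpoint rightward to a midpoint of the respective assigned intervals — and since the assigned intervals are equal, these adjustments preserve the ordering of endpoints. (Here one must recall the precise update rule in Step~1: $S^{(s)}_j$ differs from $F^{(s)}_j$ only by snapping an endpoint to the midpoint $m_j$ of the interval $j$ was assigned to \emph{before} the conditioning, and when $I(i) = I(j)$ these midpoints coincide.) Thus $t_r(j) \le t_r(i)$ and $t_d(j) \le t_d(i)$. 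If instead $I(i) \subseteq I(j)_{\text{right}}$, then $S^{(s)}_i \subseteq I(j)_{\text{right}}$. For the left index: $t_r(i) \ge$ the smallest $\mathcal{I}_{qk}$-index inside $I(j)_{\text{right}}$, while $t_r(j)$ is the smallest $\mathcal{I}_{qk}$-index meeting $S^{(s)}_j \subseteq I(j)$, which is at most the smallest index in $I(j)_{\text{right}}$ — so $t_r(j) \le t_r(i)$. For the right index: $S^{(s)}_j \subseteq I(j)$ so $t_d(j) \le$ largest $\mathcal{I}_{qk}$-index in $I(j)$; and $S^{(s)}_i \subseteq I(j)_{\text{right}} \subseteq I(j)$, but I need $t_d(j) \le t_d(i)$, which does not follow from nesting alone — I would instead use that $j \prec i$ and LP feasibility to argue $d^s_j < d^s_i$ (no fractional mass of $j$ can sit after all of $i$), hence $d^s_j \le d^s_i$, and that the $S$-update only ever moves the right endpoint of $S^{(s)}_j$ to the midpoint of $I(j)$ which lies in $I(j)_{\text{left}}$ — in particular strictly left of any point of $I(j)_{\text{right}} \supseteq S^{(s)}_i$ — so $t_d(j) \le t_d(i)$. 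Once all cases give $t_r(j) \le t_r(i)$ and $t_d(j) \le t_d(i)$, the definition $[r_j,d_j] = I_{t_r(j)+1}\cup\cdots\cup I_{t_d(j)-1}$ yields $r_j \le r_i$ and $d_j \le d_i$, establishing condition~2. I expect the bookkeeping around the $S$-update rule (making sure the midpoint-snapping never reverses an endpoint ordering) to be the fiddly part, whereas the laminar-structure case split and the invocation of Lemma~\ref{lem:meta} are routine.
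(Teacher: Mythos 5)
Your proposal is correct, and it rests on the same two ingredients as the paper's proof --- feasibility of $y^{(s)}$ for the LP precedence constraint, and Lemma~\ref{lem:meta} to forbid $i$ from being assigned under $I(j)_{\text{left}}$ (resp.\ $j$ under $I(i)_{\text{right}}$) --- but you organize them differently. You argue directly, splitting on the laminar relation between $I(j)$ and $I(i)$ (disjoint, equal, or nested into a specific half) and then chasing the endpoints of $S^{(s)}_j$ and $S^{(s)}_i$ through the update rule in each case. The paper instead argues by contradiction from $r_j>r_i$ and keys on a single dichotomy: either $F^{(s)}_i$ has mass strictly left of $r_i$, which already violates LP feasibility because $r_i$ lies weakly left of all of $S^{(s)}_j\supseteq F^{(s)}_j$; or $S^{(s)}_i$ properly extends past $F^{(s)}_i$ on the left, which by the update rule forces $F^{(s)}_i\subseteq I(i)_{\text{right}}$ and $r_i=m_i+1$, whence $S^{(s)}_j\subseteq I(i)_{\text{right}}$ and Lemma~\ref{lem:meta} is contradicted. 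That dichotomy collapses your case analysis to two lines per endpoint, while your version is longer but makes explicit where the minimality of $I(j)$ (i.e.\ that $S^{(s)}_j$ straddles $m_j$) is used. One small slip to fix: the update rule sets the right endpoint of $S^{(s)}_j$ to $m_j+1$, the first slot of $I(j)_{\text{right}}$, not to $m_j\in I(j)_{\text{left}}$; this does not break your argument, since $m_j+1$ still lies weakly to the left of every $\mathcal{I}_{qk}$-interval meeting $S^{(s)}_i\subseteq I(i)\subseteq I(j)_{\text{right}}$, so $t_d(j)\le t_d(i)$ survives, but the phrasing as written is inaccurate.
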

\begin{proof}
Suppose to the contrary that $r_j>r_i$. If $i$ has any fractional support to the left of $r_i$, which also means to the left of $S^{(s)}_j$, then that is a clear contradiction because then we would have a non-zero fraction of $i$ and no amount of $j$ scheduled before $r_i$. So assume that this is not the case. 

Because $S^{(s)}_i$ extends to the left of $r_i$, and hence to the left of the fractional support of $i$, it must be that $F^{(s)}_i \subseteq I(i)_{\text{right}}$ and $r_i=m_i+1$. In that case, $S^{(s)}_j \subseteq I(i)_{\text{right}}$ and thus $j$ is assigned to a sub-interval of $I(i)_{\text{right}}$, contradicting Lemma~\ref{lem:meta}. 

The proof for $d_j\le d_i$ follows similarly. Assume otherwise that $d_j>d_i$. If $j$ has any fractional support to the right of $d_j$, which also means to the right of $S^{(s)}_i$, that is a clear contradiction. So the only possibility is that $d_j=m_j$ and that $i$ is assigned to a sub-interval of $I(j)_{left}$, contradicting Lemma~\ref{lem:meta}.
\end{proof}

\begin{lemma}
For all $i\in \mathcal{J}'$ and $j\in \mathcal{J}_{\text{top}}$, if $j\prec i$ or $i\prec j$, then $\tilde{\sigma}(i) \not\in [r_j,d_j]$. Thus, condition 3 is satisfied.
\end{lemma}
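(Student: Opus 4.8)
The plan is to show that any job $i \in \mathcal{J}'$ which is in a precedence relation with a job $j \in \mathcal{J}_{\text{top}}$ is scheduled by $\tilde\sigma$ strictly outside the interval $[r_j, d_j] = S'(j)$, which is a union of consecutive level-$qk$ intervals. Recall that $\tilde\sigma$ schedules only jobs in $\mathcal{J}_{\text{bottom}}$, i.e. jobs assigned in round $s$ to a level $\ell \ge qk$, so $i$ lives inside some interval $I_p \in \mathcal{I}_{qk}$; in particular $\tilde\sigma(i) \in I_p$ for a single $p$. So it suffices to show that $I_p$ is not one of the intervals $I_{t_r(j)+1}, \dots, I_{t_d(j)-1}$ making up $S'(j)$.

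I would split into the two cases $j \prec i$ and $i \prec j$, and in each case use Lemma~\ref{lem:meta}. First suppose $j \prec i$. By feasibility of the LP solution $y^{(s)}$, job $i$ can have no fractional support at or before $r^s_j$ — more precisely, $i$ cannot have any fractional support weakly to the left of the whole fractional support interval $F^{(s)}_j$, since the precedence constraint forces every unit of $i$ to come strictly after the last unit of $j$. Hence $F^{(s)}_i$ lies strictly to the right of $r^s_j$. Now $F^{(s)}_i \subseteq S^{(s)}_i \subseteq I(i)$, and $i$ is assigned to $I(i) = I_p$ (since it is a bottom job at level $\ge qk$, its assigned interval is a sub-interval of some $I_p \in \mathcal{I}_{qk}$, and because $\tilde\sigma(i) \in S^{(s)}_i \subseteq I_p$ we get $\tilde\sigma(i) \in I_p$). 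I claim $I_p$ must be weakly to the right of $I_{t_r(j)}$: indeed $S^{(s)}_j$ intersects $I_{t_r(j)}$, so $r^s_j \le$ right endpoint of $I_{t_r(j)}$, and since $F^{(s)}_i$ is strictly to the right of $r^s_j$... here I need to be slightly careful, since "strictly to the right of a point" need not mean "in a later level-$qk$ interval". The clean way is to invoke Lemma~\ref{lem:meta}: $i$ cannot be assigned to a sub-interval of $I^{(s)}(j)_{\text{left}}$. Combined with the fact (used in the proof of condition 2) that $i$ has no fractional support to the left of $S^{(s)}_j$, one deduces $t_r(i) \ge t_r(j)$, so $I_p$ with $p = t_r(i) = t_d(i)$ (all of $i$'s support sits in the single interval $I_p$) satisfies $p \ge t_r(j)$; hence $I_p \notin \{I_{t_r(j)+1},\dots\}$ only if $p = t_r(j)$. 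To rule out $p$ strictly between $t_r(j)+1$ and $t_d(j)-1$ I additionally use: if $i$ landed properly inside $S'(j)$, then $S^{(s)}_j$ would contain $I_p$ in its interior, so $j$ would have fractional support straddling $I_p$, contradicting $j \prec i$ and feasibility (part of $j$ after $i$). So $\tilde\sigma(i) = \tilde\sigma$-time of $i$ lies in $I_p$ with $p \le t_r(j)$ or $p \ge t_d(j)$, i.e. outside $[r_j,d_j]$.

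The case $i \prec j$ is symmetric, swapping the roles of left/right and $t_r/t_d$, again using feasibility (no fractional support of $j$ to the left of $S^{(s)}_i$'s relevant end) together with Lemma~\ref{lem:meta} ($j$ cannot be assigned to a sub-interval of $I^{(s)}(i)_{\text{right}}$). The main obstacle I anticipate is being careful about the boundary level-$qk$ intervals: $S'(j)$ is deliberately $S^{(s)}_j$ with its first and last level-$qk$ intervals chopped off, so the argument must show that any bottom job $i$ in precedence with $j$ either sits in $I_{t_r(j)}$, or $I_{t_d(j)}$, or strictly outside $[t_r(j), t_d(j)]$ — never in an interior interval — and the reason it can never be interior is exactly that an interior interval would be straddled by $S^{(s)}_j \subseteq F^{(r)}_j$, and in fact straddled by $F^{(s)}_j$ up to the endpoint trimming, which would put fractional mass of $j$ on both sides of all of $i$'s mass, violating the LP precedence constraint. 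I would write the interior-interval exclusion as the key lemma and then note the two endpoint cases are harmless because $I_{t_r(j)}, I_{t_d(j)} \notin S'(j)$ by construction.
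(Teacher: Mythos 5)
There is a genuine gap in your interior-exclusion step, and it is precisely the point that the support-interval machinery was introduced to handle. You claim that if $I_p$ lies properly inside $S'(j)$ then $F^{(s)}_j$ straddles $I_p$ ``up to the endpoint trimming,'' so that $j$ would have fractional mass after $i$ and the LP would be violated. But $S^{(s)}_j$ can exceed $F^{(s)}_j$ by far more than one level-$qk$ interval at one end: when $F^{(s)}_j\subseteq I(j)_{\text{left}}$, the update rule sets $S^{(s)}_j=[t_r,m_j+1]$, and the stretch between $d^s_j$ and $m_j$ can contain many intervals of $\mathcal{I}_{qk}$, all of which survive the chopping and are interior to $S'(j)$. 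If $I_p$ is one of these, then $j$ has no fractional support in or to the right of $I_p$, and scheduling all of $j$ before all of $i$ is perfectly consistent with $j\prec i$ --- there is no LP contradiction to be had. What kills this case is Lemma~\ref{lem:meta}, not feasibility: such an $I_p$ ends at or before $m_j$ and is contained in $I(j)$, hence $I_p\subseteq I(j)_{\text{left}}$, so $i$ (which is assigned to a sub-interval of $I_p$) would be assigned to a sub-interval of $I(j)_{\text{left}}$, which Lemma~\ref{lem:meta} forbids. You do invoke Lemma~\ref{lem:meta}, but only to derive $t_r(i)\ge t_r(j)$, which bounds $p$ from the wrong side: for $j\prec i$ the conclusion you need is $p\ge t_d(j)$, and the entire interior of $S'(j)$ is still unaccounted for.

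The paper's proof runs the two tools in the opposite order and thereby avoids the gap. First, one may assume $I_p\subseteq I(j)$ (otherwise $I_p$ is disjoint from $S'(j)\subseteq I(j)$ and there is nothing to prove), and Lemma~\ref{lem:meta} then forces $I_p\subseteq I(j)_{\text{right}}$. Second, LP feasibility gives that $F^{(s)}_j$ ends no later than the right end of $I_p$. Third, the right endpoint of $S^{(s)}_j$ is either the right endpoint of $F^{(s)}_j$ or it is $m_j+1$, which is at most the left end of $I_p$ since $I_p\subseteq I(j)_{\text{right}}$; in either case $t_d(j)\le p$, so $S'(j)=I_{t_r(j)+1}\cup\dots\cup I_{t_d(j)-1}$ lies strictly to the left of $I_p$. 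Your argument becomes correct once you replace the ``straddling'' claim with this case split on the two possible values of the right endpoint of $S^{(s)}_j$ (and the symmetric split on the left endpoint for the case $i\prec j$).
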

\begin{proof}
Suppose $j\prec i$ and $\tilde{\sigma}(i) \in I_p\in \mathcal{I}_{qk}$. We argue that in this case $I_p \cap [r_j,d_j] =\phi$ and thus $\tilde{\sigma}(i) \not\in [r_j,d_j]$. The argument for the case $i\prec j$ follows similarly. 

Notice that because $\tilde{\sigma}(i) \in I_p\in \mathcal{I}_{qk}$, $i$ must have been assigned to a sub-interval of $I_p$ when we recursed on $I_p$. By Lemma~\ref{lem:meta}, it must be that $I_p \in I(j)_{\text{right}}$. Also, $j$ cannot have any fractional support to the right of $I_p$. These two facts imply that the right boundary of $S^{(s)}_j$, which is either the same as the right boundary of $\mathcal{F}^{(s)}_j$ or is at $(m_j+1)$, cannot be to the right of the right boundary of $I_p$. Thus $d_j$ is to the left of the left boundary of $I_p$. Hence $[r_j,d_j] \cap I_p=\phi$. 
\end{proof}

\begin{proof}(of Lemma~\ref{lem:discardrec1})
Because all conditions of Theorem~\ref{thm:borrowed} are satisfied, we get using Lemma~\ref{lem:chain}
\begin{eqnarray*}
 |\mathcal{J}_{\text{discard}}| &\le& \frac{4mT}{2^k} + 2^{qk}mCk\delta T \\
 & \le & \frac{4mT}{2^k} + 2^{Ck}mCk\delta T \\
 & = & \frac{\epsilon T}{8\log n}+ \frac{\epsilon T}{8\log n} =  \frac{\epsilon T}{4\log n}.
\end{eqnarray*}
\end{proof}

\subsubsection{Jobs discarded due to recursions of type $2$.}
Let $\epsilon'=\epsilon/4m$. Recall that in a recursion of type 2, we delete all the jobs assigned to levels $0$ to $(C-(1/\epsilon')^2)k-1$ and retain only the later $(1/\epsilon')^2$ batches. We show below that in such a case, at least a $(1-\epsilon')$ fraction of the jobs in the top $C$ batches are in the last $(1/\epsilon')^2$ batches and thus, by deleting the jobs in the first $C-(1/\epsilon')^2$ batches we only delete an $\epsilon'$ fraction of the jobs. 

\begin{lemma}
If case $(b)$ occurs in step 1 of the algorithm, then 
\begin{equation}
\label{eqn:discarded}
 \sum_{i=0}^{C-(1/\epsilon')^2-1}|\mathcal{J}^{(s)}(\mathcal{B}_i)|  \le \epsilon'  \sum_{i=C-(1/\epsilon')^2}^{C-1}|\mathcal{J}^{(s)}(\mathcal{B}_i)|  .
\end{equation} 
\end{lemma}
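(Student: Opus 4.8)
The statement says: if case $(b)$ occurs then the jobs in the first $C - (1/\epsilon')^2$ batches are at most an $\epsilon'$ fraction of the jobs in the last $(1/\epsilon')^2$ batches. The plan is to exploit the fact that case $(b)$ means \emph{no} batch $\mathcal{B}_p$ with $1 \le p \le C-1$ is good, i.e. for every such $p$ we have the reverse of \eqref{eqn:gb}:
\[ |\mathcal{J}^{(s)}(\mathcal{B}_p)| > \frac{\epsilon}{4m} \sum_{i=0}^{p-1} |\mathcal{J}^{(s)}(\mathcal{B}_i)| = \epsilon' \sum_{i=0}^{p-1} |\mathcal{J}^{(s)}(\mathcal{B}_i)| . \]
Write $a_p = |\mathcal{J}^{(s)}(\mathcal{B}_p)|$ and $A_p = \sum_{i=0}^{p} a_i$ for the prefix sums. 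The failure of goodness for batch $p$ reads $a_p > \epsilon' A_{p-1}$, which is equivalent to $A_p = A_{p-1} + a_p > (1+\epsilon') A_{p-1}$. So the prefix sums grow geometrically: $A_p > (1+\epsilon')^{p} A_0 \ge (1+\epsilon')^{p}$ (note $A_0 = a_0 \ge 1$ since there is at least one job, or if $A_0=0$ the claim is trivial after shifting to the first nonempty batch — I will handle the degenerate all-empty-prefix case separately, but morally $A_p$ grows by a factor $1+\epsilon'$ each step).

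The key quantitative step is then: the ratio of ``total mass in top $C$ batches'' to ``mass in the last $(1/\epsilon')^2$ batches'' is $A_{C-1} / (A_{C-1} - A_{C - (1/\epsilon')^2 - 1})$, and I want to show this is at most $1/(1-\epsilon')$, equivalently $A_{C - (1/\epsilon')^2 - 1} \le \epsilon' A_{C-1}$, equivalently $A_{C - (1/\epsilon')^2 - 1} \le \epsilon' \big(A_{C-1} - A_{C-(1/\epsilon')^2 - 1}\big) / (1-\epsilon')$... — cleanest is to directly show $A_{C-(1/\epsilon')^2-1} \le \frac{\epsilon'}{1+\epsilon'} A_{C-1}$ or similar and convert. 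Since $A_p > (1+\epsilon')A_{p-1}$, iterating from $p = C-(1/\epsilon')^2$ up to $C-1$ gives $A_{C-1} > (1+\epsilon')^{(1/\epsilon')^2} A_{C-(1/\epsilon')^2 - 1}$. So it suffices that $(1+\epsilon')^{(1/\epsilon')^2} \ge (1+\epsilon')/\epsilon'$, i.e. roughly $(1+\epsilon')^{(1/\epsilon')^2} \ge 2/\epsilon'$. Using $1 + \epsilon' \ge e^{\epsilon'/2}$ for $\epsilon' \in (0,1)$ (or $\ln(1+\epsilon') \ge \epsilon'/2$), we get $(1+\epsilon')^{(1/\epsilon')^2} \ge e^{1/(2\epsilon')} \ge 1/\epsilon' \cdot (\text{large})$ for small $\epsilon'$, which comfortably beats $2/\epsilon'$. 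This is exactly why $C = 2(4m/\epsilon)^2 + 1 = 2(1/\epsilon')^2 + 1$ was chosen so that there are at least $(1/\epsilon')^2$ batches \emph{after} the first $C - (1/\epsilon')^2 - 1 = (1/\epsilon')^2$ of them — the geometric growth over the final $(1/\epsilon')^2$ batches is what does all the work, and the first chunk's size doesn't even matter much.

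Concretely the steps are: (i) translate ``case $(b)$'' into ``$a_p > \epsilon' A_{p-1}$ for all $1 \le p \le C-1$''; (ii) deduce $A_p > (1+\epsilon') A_{p-1}$ and hence $A_{C-1} > (1+\epsilon')^{(1/\epsilon')^2} A_{C-(1/\epsilon')^2-1}$; (iii) show $(1+\epsilon')^{(1/\epsilon')^2} \ge 1/\epsilon' + 1$ via $\ln(1+x) \ge x/2$ for $x\in(0,1]$, giving $(1+\epsilon')^{(1/\epsilon')^2} \ge e^{1/(2\epsilon')}$, and check $e^{1/(2\epsilon')} \ge 1/\epsilon' + 1$ for all $\epsilon' \in (0,1)$ (it holds since $e^{x/2} \ge 1 + x$ for $x \ge 0$ with $x = 1/\epsilon' \ge 1$); (iv) rearrange $A_{C-1} > (1+ 1/\epsilon')A_{C-(1/\epsilon')^2-1}$ into $A_{C-(1/\epsilon')^2-1} < \frac{\epsilon'}{1+\epsilon'}A_{C-1}$, hence $A_{C-(1/\epsilon')^2-1} \le \epsilon'(A_{C-1} - A_{C-(1/\epsilon')^2-1})$, which is exactly \eqref{eqn:discarded} after writing the two sides as $\sum_{i=0}^{C-(1/\epsilon')^2-1} a_i$ and $\sum_{i=C-(1/\epsilon')^2}^{C-1} a_i$.

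\textbf{Main obstacle.} The only real subtlety is the base case of the geometric growth: the argument $A_p > (1+\epsilon')^p A_0$ needs $A_0 > 0$, and more importantly goodness is only required for $p \ge 1$, so if $a_0 = 0$ (and possibly the first few batches are empty) the chain of inequalities must be started from the first index $p_0$ with $A_{p_0} > 0$; but then $a_{p_0+1} > \epsilon' A_{p_0} > 0$ etc., and the geometric growth over the last $(1/\epsilon')^2$ batches still applies verbatim as long as $p_0 \le C - (1/\epsilon')^2 - 1$, which is forced because otherwise $\sum_{i=0}^{C-(1/\epsilon')^2-1} a_i = 0$ and \eqref{eqn:discarded} is trivial. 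A second minor point is being careful that the indices line up: the deleted batches are $\mathcal{B}_0, \dots, \mathcal{B}_{C-(1/\epsilon')^2-1}$ (that's $C - (1/\epsilon')^2$ batches), and the retained ones are $\mathcal{B}_{C-(1/\epsilon')^2}, \dots, \mathcal{B}_{C-1}$ (that's exactly $(1/\epsilon')^2$ batches), so the product $(1+\epsilon')^{(1/\epsilon')^2}$ has the right exponent — I'll double-check this off-by-one when writing it out. Everything else is a routine one-line estimate.
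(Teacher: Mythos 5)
Your proposal is correct in substance but takes a longer route than the paper. Both arguments start from the same place: in case $(b)$ each of the last $(1/\epsilon')^2$ batches fails the goodness test, i.e.\ $a_p > \epsilon' A_{p-1}$ for $p \in [C-(1/\epsilon')^2,\, C-1]$ in your notation. The paper then uses only the crude bound $A_{p-1} \ge A_{C-(1/\epsilon')^2-1} = S$ for each such $p$, so each of the $(1/\epsilon')^2$ retained batches individually has more than $\epsilon' S$ jobs; summing gives $\sum_{i=C-(1/\epsilon')^2}^{C-1} a_i > (1/\epsilon')^2 \cdot \epsilon' S = S/\epsilon'$, which is the claim in one line with no analytic estimates and no base-case worries about empty prefixes. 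You instead compound the growth to get $A_{C-1} > (1+\epsilon')^{(1/\epsilon')^2} A_{C-(1/\epsilon')^2-1}$ and then need $(1+\epsilon')^{(1/\epsilon')^2} \ge 1 + 1/\epsilon'$; this works, but your justification of that step is slightly off: $e^{x/2} \ge 1+x$ is \emph{false} for $x \in (0, 2.5]$ (e.g.\ $e^{1/2} \approx 1.65 < 2$), so your parenthetical ``for all $\epsilon' \in (0,1)$ ... since $x = 1/\epsilon' \ge 1$'' does not hold as stated. It is rescued only because here $\epsilon' = \epsilon/4m \le 1/4$, so $x = 4m/\epsilon \ge 4$ and the inequality does hold on the range that matters; you should either restrict to that range explicitly or, better, drop the compounding entirely and use the paper's direct summation, which also makes your ``empty prefix'' obstacle vanish (the paper's bound $a_p > \epsilon' S$ is vacuously compatible with $S = 0$).
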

\begin{proof}
Let $S=\sum_{i=0}^{C-(1/\epsilon')^2-1}|\mathcal{J}^{(s)}(\mathcal{B}_i)| $, the left hand side of \eqref{eqn:discarded}. Case $(b)$ occurs in step 1 of the algorithm if none of the batches $\mathcal{B}_p$ for $p\in [ C-(1/\epsilon')^2 , C-1]$ are good. But then for $p\in [ C-(1/\epsilon')^2, C-1]$, we must have
\[ |\mathcal{J}^{(s)}(\mathcal{B}_{p})| > \epsilon' \sum_{i=0}^{p-1} |\mathcal{J}^{(s)}(\mathcal{B}_{i})| \ge \epsilon'  \sum_{i=0}^{C-(1/\epsilon')^2-1} |\mathcal{J}^{(s)}(\mathcal{B}_{i})| = \epsilon' S .\]
This implies \eqref{eqn:discarded} as
\[   \sum_{i=C-(1/\epsilon')^2}^{C-1}|\mathcal{J}^{(s)}(\mathcal{B}_i)| \ge \sum_{i=C-(1/\epsilon')^2}^{C-1} \epsilon' S = \left(\frac{1}{\epsilon'}\right)^2 \epsilon' S=\frac{S}{\epsilon'} . \]
\end{proof}

This implies that when we discard the top $C-(1/\epsilon')^2$ batches, we are only discarding at most an $\epsilon'$ fraction of the jobs in the next $(1/\epsilon')^2$ batches. We can imagine this as putting a charge of $\epsilon'$ on every job in the last $(1/\epsilon')^2$ batches. Thus the total charge on all the jobs at the end of the algorithm is an upper bound on the number of jobs discarded in recursions of type 2 during the algorithm.

\begin{lemma}
\label{lem:charge}
For every job $j$, we put a charge on $j$ at most once.
\end{lemma}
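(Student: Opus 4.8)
The plan is to show that once a job $j$ receives a charge during some recursion of type $2$, it is never again eligible to receive a charge in any later recursion of type $2$. The key point is to understand what happens to a job that lies in the last $(1/\epsilon')^2$ batches of a recursion of type $2$: after that recursion, the levels $0$ through $(C-(1/\epsilon')^2)k-1$ are deleted, and the algorithm recurses only on the intervals at level $(C-(1/\epsilon')^2)k$. So a job $j$ that was charged (hence assigned to a level in $[(C-(1/\epsilon')^2)k, Ck-1]$, which is at least level $k$ from the root of the current call) gets passed into exactly one recursive subcall, namely the one rooted at the interval $I' \in \mathcal{I}_{(C-(1/\epsilon')^2)k}$ containing its assigned interval. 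Within that subcall, $j$ sits at level $0$ through $(1/\epsilon')^2 k - 1$ relative to $I'$ — that is, it is among the \emph{top} batches of the subcall.

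First I would formalize this: if $j$ is charged in a recursion of type $2$ on interval $I$, then $j$ is assigned (in the current round $s$) to an interval $I(j)$ at a level $\ell$ with $(C-(1/\epsilon')^2)k \le \ell \le Ck-1$ within the subtree of $I$. Hence there is a unique interval $I' \in \mathcal{I}_{(C-(1/\epsilon')^2)k}$ (inside $I$'s subtree) with $I(j) \subseteq I'$, and $j$ is passed only to the recursive call Schedule$(y^{(s)}, I')$. Relative to the root $I'$ of that call, $I(j)$ lies at level $\ell - (C-(1/\epsilon')^2)k \in [0, (1/\epsilon')^2 k - 1]$, i.e. in one of the first $(1/\epsilon')^2 \le C - (1/\epsilon')^2$ batches. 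Crucially, by the way support intervals $S^{(s)}_j$ are updated in Step~1 — a job assigned to a level above the level currently being conditioned on is \emph{never} moved down the laminar family — the level assignment of $j$ can only move \emph{up} (to smaller level index) in subsequent rounds, never down. Here I would invoke Lemma~\ref{lem:meta} / Observation~\ref{obs:welldef}-style reasoning: $F^{(s')}_j \subseteq F^{(s)}_j$ for $s' < s$, and the support-interval rule keeps $j$ at its level or higher.

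Therefore, in any later recursion of type $2$ that $j$ is involved in — which must take place within the subtree of $I'$ or deeper — the job $j$ lies in one of the top $C - (1/\epsilon')^2$ batches relative to that call's root (its level relative to any descendant root of $I'$ is at most its level relative to $I'$, which is already $\le (1/\epsilon')^2 k - 1$). In a recursion of type $2$, the jobs that receive a charge are precisely those in the \emph{last} $(1/\epsilon')^2$ batches; the jobs in the first $C - (1/\epsilon')^2$ batches are the ones that get \emph{discarded}, not charged. A discarded job is removed from the instance entirely and plays no further role. Hence $j$ is never charged a second time: either it is discarded (in which case it is gone), or it is again among the jobs being charged only if it were in the last $(1/\epsilon')^2$ batches — but it is pinned to the top batches by the level-monotonicity argument. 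One subtlety to nail down carefully is that a recursion of type $2$ nested below may itself be on an interval deep inside $I'$ where $j$ is no longer even present; that only helps, since a job not in a call cannot be charged by it.

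The main obstacle I anticipate is making the level-monotonicity claim airtight: specifically, arguing that after $j$ is charged in the type-$2$ recursion on $I$, in \emph{every} subsequent round $s'$ (across all deeper recursive calls, each doing its own conditionings in Step~1) the assigned level of $j$ relative to the root of whatever call it is in does not exceed its level relative to $I'$. This requires combining (i) the support-interval update rule in Step~1 that forbids pushing a job below the level being conditioned on, (ii) Observation~\ref{obs:condsupp} that conditioning only shrinks fractional supports, and (iii) the fact that after the type-$2$ recursion on $I$, the only call containing $j$ is the one rooted at $I'$ and all further processing of $j$ happens within $I'$'s subtree. Once that invariant is established, the conclusion that $j$ falls in the top $C - (1/\epsilon')^2$ batches — hence is discarded or untouched, but never charged again — follows immediately, giving the claimed at-most-once charging.
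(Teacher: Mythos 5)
Your overall plan is the same as the paper's: after $j$ is charged once, it lands in the top $(1/\epsilon')^2$ batches of the subcall rooted at some $I'\in\mathcal{I}_{(C-(1/\epsilon')^2)k}$, and since a recursion of type $2$ charges only the \emph{last} $(1/\epsilon')^2$ batches while discarding the top $C-(1/\epsilon')^2=(1/\epsilon')^2+1$ batches, a job pinned to the top batches can never be charged again. That framing, and the observation that re-rooting at a deeper interval only decreases a job's relative level, are correct.

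The gap is in the step you yourself flag as the obstacle: the claim that ``the level assignment of $j$ can only move up (to smaller level index), never down.'' As a general statement about the algorithm this is backwards. Since conditioning only shrinks fractional supports (Observation~\ref{obs:condsupp}) and the support intervals $S^{(s)}_j$ only shrink, a job's assigned interval can only stay the same or move \emph{down} the laminar family; it never moves up. The support-update rule in Step~1 does not ``forbid pushing a job below the level being conditioned on'' --- it protects only jobs assigned \emph{strictly above} the level $\ell$ currently being conditioned on (the $\ell(j)<\ell$ branch); a job with $\ell(j)\ge\ell$ has $S^{(s)}_j$ reset to $F^{(s)}_j$ and can move down (indeed, Observation~\ref{obs:welldef} shows entire chains being pushed into $I_{\text{right}}$). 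So none of your ingredients (i)--(iii) rules out the scenario that, inside the subcall on $I'$, Step~1 conditions on an interval at or above $j$'s level and pushes $j$ down into the last $(1/\epsilon')^2$ batches, where it would be charged a second time. The missing ingredient is specific to the situation after a type-$2$ recursion: since case $(b)$ occurred in the parent call, the parent conditioned on all $Ck$ of its top levels, so every interval $I''$ in the top $(1/\epsilon')^2$ batches of the subcall on $I'$ already has maximum chain length at most $\delta|I''|$. Hence Step~1 of the subcall never conditions on any interval in those batches, every conditioning there happens at a level strictly below $j$'s, and only then does the $\ell(j)<\ell$ branch of the update rule freeze $j$'s assignment. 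With that observation inserted, your argument closes.
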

\begin{proof}
Fix a job $j$ and suppose we put a charge on $j$ at least once. When we put a charge on $j$ for the first time, then in some recursion of type $2$ it must have been assigned to the lowest $(1/\epsilon')^2$ batches among the top $C$ batches. The algorithm will then recurse on every interval at level $(C-(1/\epsilon')^2)k$ and thus job $j$ is now in the top $(1/\epsilon')^2$ batches in one of the recursive calls. 

Let $I\in \mathcal{I}_{(C-(1/\epsilon')^2)k}$ be such that $j$ is assigned to a sub-interval of $I$. When we recursively call the algorithm on $I$, the first $(1/\epsilon')^2$ batches already satisfy the property that any interval $I'$ in them has maximum chain length at most $\delta |I'|$. Thus in step $1$ of the algorithm, we will not condition on any interval in the top $(1/\epsilon')^2$ batches. This implies that job $j$ always stays assigned to the top $(1/\epsilon')^2$ batches; this is because the assignment of a job to an interval can only change when we condition on an interval at the same level or at a level above that of the job. 

Now suppose we put a charge on $j$ again. Then we must have once again done a recursion of type $2$ within the recursive call to $I$. But $j$ is assigned to the topmost $(1/\epsilon')^2$ batches in this instance and in a recursion of type $2$, we delete the topmost $C-(1/\epsilon')^2 > (1/\epsilon')^2$ batches and put a charge on only the later $(1/\epsilon')^2$ batches, which leads to a contradiction.
\end{proof}

\begin{lemma}
Number of jobs discarded in recursions of type 2 throughout the algorithm is at most $\epsilon T/2$.
\end{lemma}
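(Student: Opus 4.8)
The plan is to bound the number of jobs discarded across all type-2 recursions by the total charge deposited by the algorithm, and then bound that total charge. First I would recall the charging scheme set up just before Lemma~\ref{lem:charge}: whenever a type-2 recursion is performed on some interval, it discards exactly the jobs assigned to its top $C-(1/\epsilon')^2$ batches, and by the preceding lemma \eqref{eqn:discarded} the number of these jobs is at most $\epsilon'$ times the number of jobs in the following $(1/\epsilon')^2$ batches. Placing a charge of $\epsilon'$ on each job in those $(1/\epsilon')^2$ batches therefore pays for all jobs discarded in that particular recursion. Summing over all type-2 recursions, the total number of jobs discarded in type-2 recursions is at most the total charge placed on all jobs throughout the run of the algorithm.

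Next I would invoke Lemma~\ref{lem:charge}, which says every job is charged at most once. Since each individual charge has value $\epsilon'$, the total charge is at most $\epsilon' n$, where $n$ is the number of jobs.

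Finally, I would bound $n$ using the LP: summing the machine-capacity constraint $\sum_j y_{jt}\le m$ over $t=1,\dots,T$ and using $\sum_{t} y_{jt}=1$ for every job gives $n\le mT$. Plugging in $\epsilon'=\epsilon/(4m)$ yields a total charge of at most $(\epsilon/(4m))\cdot mT=\epsilon T/4\le\epsilon T/2$, which is the claimed bound.

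I do not expect a genuine obstacle here: the delicate point — that no job is ever double-charged, even when type-2 recursions interleave with type-1 recursions across the recursion tree — has already been isolated and handled in Lemma~\ref{lem:charge}, where the construction of the support intervals $S^{(s)}_j$ was arranged precisely so that a charged job remains in the top $(1/\epsilon')^2$ batches of its recursive subcall and hence is never in a discarded batch again. The only things to be careful about are applying the "discarded $\le$ deposited charge" inequality per recursion before summing, and using the bound $n\le mT$ rather than a weaker one; both are routine.
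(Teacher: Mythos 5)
Your proof matches the paper's argument: bound the discarded jobs by the total deposited charge, use Lemma~\ref{lem:charge} to get a total charge of at most $\epsilon' n$, and conclude via $n\le mT$ (which the paper uses implicitly in writing $\epsilon n/4m\le \epsilon T/4$). The reasoning is correct and essentially identical to the paper's.
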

\begin{proof}
Because the number of jobs discarded in recursions of type 2 throughout the algorithm is at most the total charge on all the jobs and by Lemma~\ref{lem:charge}, each job is charged at most once, we get that the number of jobs discarded in recursions of type 2 is at most
\[ \epsilon'n=\epsilon n/4m \le \epsilon T/4 \le \epsilon T/2.\]
\end{proof}

\section{Conclusion and Open Problems}
In this paper, we show that for constant $m$ and $\epsilon$, $\polylog(n)$ rounds of Sherali-Adams hierarchy reduce the integrality gap of the natural LP to $(1+\epsilon)$. A fascinating open problem is whether just $f(\epsilon,m)$ rounds of Sherali-Adams hierarchy, or the stronger Lasserre hierarchy, can also achieve the same approximation, for some function $f$ independent of the number of jobs. 

Another interesting open problem is to resolve the computational complexity of the problem. For $m=3$ and unit jobs, it is not even known if the problem is NP-hard.

Finally, another open problem is to get a $(1+\epsilon)$-approximation algorithm for arbitrary sized jobs and constant $m$. In this case, nothing better than a $(2-\frac{1}{m})$ approximation algorithm is known \cite{G66} to the best of our knowledge and hierarchies might help to close this gap.

\section{Acknowledgements}
We would to like to thank Seeun William Umboh, Martin B\"{o}hm and Nikhil Bansal for helpful discussions throughout this work.

\bibliographystyle{alpha}
{\small \bibliography{refr} }

\section{Appendix}
\label{sec:appendix}

In this section, we give the proof of Theorem~\ref{thm:borrowed}. This was already proved in \cite{R16} and we reproduce parts of their proof here. \\

\noindent{\bf Theorem~\ref{thm:borrowed} (\cite{R16}, restated).} {\em
Suppose we are given a feasible schedule $\tilde{\sigma}$ of the jobs in $\mathcal{J}'\subseteq \mathcal{J}_{\text{bottom}}$ and let the maximum chain length among jobs in $\mathcal{J}_{\text{top}}$ be $\mathcal{C}$. Suppose we are also given for each $j\in \mathcal{J}_{\text{top}}$ an interval $[r_j,d_j]$ such that: 
\begin{enumerate}
\item each $r_j,d_j$ coincides with the boundary of the intervals in $\mathcal{I}_{qk}$.
\item if $j\prec i$ for some $i\in \mathcal{J}_{\text{top}}$, then $r_j\le r_i$ and $d_j\le d_i$.
\item if $j\prec i$ or $i\prec j$ for some $i\in \mathcal{J}'$, then $\tilde{\sigma}(i)\not\in[r_j,d_j]$.
\item $\mathcal{F}^{(s)}_j\subseteq ext([r_j,d_j])$ and $m_j$ lies in the interior of $ext([r_j,d_j])$.
\end{enumerate}
Then, we can extend $\tilde{\sigma}$ to a feasible schedule $\sigma$ of $(\mathcal{J}_{\text{top}} \setminus \mathcal{J}_{\text{discard}}  ) \cup\mathcal{J}'$ where
\[ | \mathcal{J}_{\text{discard}} | \le  \frac{4mT}{2^k} + 2^{qk}m\mathcal{C} \]
and every $j\in \mathcal{J}_{\text{top}} \setminus \mathcal{J}_{\text{discard}} $ is scheduled in the interval $[r_j,d_j].$
} 

\begin{proof}
For $t\in [T]$, let $cap(t)=m-|\tilde{\sigma}^{-1}(t)|$ denote the number of empty slots at time $t$ in the schedule $\tilde{\sigma}$. Make a bipartite graph $G^+=(\mathcal{J}_{\text{top}}, U , E^{+})$ where one side of the bipartition consists of nodes representing jobs in $\mathcal{J}_{\text{top}}$, the other side $U$ consists of one node for each time $t$ with capacity $cap(t)$. The edge set $E^+$ is made as follows: for every $j\in \mathcal{J}_{\text{top}}$, we make an edge between the node for $j$ and the node for each $t\in ext([r_j,d_j])$.

Because of the first part of condition $4$, $\mathcal{F}^{(s)}(j)\subseteq ext([r_j,d_j])$, we know that there exists a perfect fractional matching of $\mathcal{J}_{\text{top}}$ in $G^+$, and thus there also exists a perfect integral matching in $G^+$. But this does not immediately give us a feasible schedule of $\mathcal{J}_{\text{top}}\cup  \mathcal{J}'$ as the schedule constructed in this way might not satisfy the precedence constraints.

To rectify this, we construct a somewhat smaller bipartite graph $G=(\mathcal{J}_{\text{top}}, U , E)$, where the edge set $E$ is made as follows: for all $j\in \mathcal{J}_{\text{top}}$, we make an edge between the node for $j$ and the nodes for each $t\in [r_j,d_j]$. Thus $G$ has a smaller neighbourhood for every $j\in \mathcal{J}_{\text{top}}$ than $G^+$. Notice that because of condition $3$, if we can find a perfect integral matching in $G$, then at least all the precedence constraints between the jobs in $\mathcal{J}_{\text{top}}$ and jobs in $\mathcal{J}'$ will be satisfied, even though they might not be satisfied among the jobs in $\mathcal{J}_{\text{top}}$. We will first show how to find a matching in $G$ of all but $4mT/2^k$ vertices in $\mathcal{J}_{\text{top}}$. Later we will take care of the precedence constraints among jobs in $\mathcal{J}_{\text{top}}$.

Let us see now how to bound the size of the maximum matching in $G$. By Hall's Theorem\cite{Sch}, the number of vertices of $\mathcal{J}_{\text{top}}$ which are left unmatched in a maximum matching in $G$ is exactly
\[  \max_{J \subseteq \mathcal{J}_{\text{top}}} \{ |J|-|N(J)|  \} \]
where $N(J)$ is the neighbour set of $J$ in $G$ and $|N(J)|$ denotes the sum of capacities of nodes in $N(J)$. Let $N^+(J)$ denote the neighbour set of $J$ in $G^+$. Again by Hall's Theorem, for any $J \subseteq \mathcal{J}_{\text{top}}$
\[  |J|-|N(J)|  \le  |N^+(J)|-|N(J)| . \]

Let the intervals at level $\mathcal{I}_{qk}$ be $I_1,\dots,I_{2^{qk}}$. First consider the case when for each $j\in \mathcal{J}_{\text{top}}$, $[r_j,d_j]\neq\phi$. Then for each job $j$, $N(\{j\})$ and $N^+(\{j\})$ is an interval of the form $I_a \cup I_{a+1} \cup\dots \cup I_b$, and hence for any $J \subseteq \mathcal{J}_{\text{top}}$, $N(J)$ and $N^+(J)$ are sets of disjoint intervals, where each interval is of the form $I_a \cup I_{a+1} \cup\dots \cup I_b$.  Fix a $J$ and notice $N^+(J) \subseteq \cup_A ext(A)$, where the summation is over all connected intervals $A$ of $N(J)$. Hence we can bound $|N^+(J)|-|N(J)| $ by the number of connected intervals of $N(J)$ times twice the maximum capacity of any interval in $\mathcal{I}_{qk}$, which is at most $2mT/2^{qk}$. 

Let the number of connected intervals of $N(J)$ be $c$. Then by the discussion in the previous paragraph, we can find a matching in $G$ which leaves at most $2cmT/2^{qk}$ jobs in $\mathcal{J}_{\text{top}}$ unmatched. We will now show that $c\le 2^{qk-k}$ which will give that at most $2mT/2^k$ jobs in $\mathcal{J}_{\text{top}}$ are left unmatched.

By condition 4, each $j\in \mathcal{J}_{\text{top}}$ has either $m_j\in [r_j,d_j]$ or $m_j+1\in[r_j,d_j]$ and thus, $N(\{j\})$ is a connected interval containing at least one of $m_j$ and $m_j+1$. Thus for an interval $I$ in the top levels, $N(\mathcal{J}^{(s)}(I) \cap J)$ is a single connected interval. This implies that the number of connected intervals of $N(J)$ is at most the number of intervals in the top levels which is at most
\[ \sum_{\ell=0}^{(q-1)k-1} 2^{\ell} \le 2^{(q-1)k} . \]

It remains to take care of the assumption that for each $j\in \mathcal{J}_{\text{top}}$, $[r_j,d_j]\neq\phi$. This won't happen only if for a job $j$, $N^+(\{j\})$, which consists of $ext([r_j,d_j])$, contains just one interval on either side around $m_j$. This is because of the assumption that $m_j$ lies in the interior of $ext([r_j,d_j])$. In this case, the fractional support of $j$ is also fully contained in the two intervals around $m_j$. Thus the number of such jobs can be at most the sum of the capacity of the intervals around the middle point of every interval in the top levels and is thus at most
\[ 2\frac{mT}{2^{qk}}  \sum_{\ell=0}^{(q-1)k-1} 2^{\ell} \le \frac{2mT}{2^k}. \]
We can discard all such jobs in the beginning and then in total we will have discarded at most $ \frac{4mT}{2^k}$ jobs from $\mathcal{J}_{\text{top}}$.

To recap, till now we have discarded $ \frac{4mT}{2^k}$ jobs from $\mathcal{J}_{\text{top}}$ and have made a schedule which might violate precedence constraints among the top jobs, but satisfies everything else required to prove Theorem~\ref{thm:borrowed}. To take care of the precedence constraints among the top jobs, we can now directly use Theorem 10 from \cite{R16} which, assuming conditions 1, 2 and 3, gives us a feasible partial schedule by discarding an extra $2^{qk}m\mathcal{C}$ jobs from $\mathcal{J}_{\text{top}}$. Moreover every job $j\in \mathcal{J}_{\text{top}}$ which is scheduled is scheduled within the interval $[r_j,d_j]$. This finishes the proof of Theorem~\ref{thm:borrowed}.
\end{proof}

\end{document}